\documentclass[11pt,a4paper,reqno]{amsart} 

\usepackage[utf8]{inputenc}
%\usepackage{pgfplots}
%\pgfplotsset{compat=newest}
\usepackage{hyperref}
\hypersetup{
  colorlinks   = true, %Colours links instead of ugly boxes
  urlcolor     = blue, %Colour for external hyperlinks
  linkcolor    = blue, %Colour of internal links
  citecolor   = blue %Colour of citations
}

\usepackage{amsmath}
\usepackage{amsthm}
\usepackage{amsfonts}
\usepackage{amssymb}
\usepackage{array}
\usepackage[margin=2.6cm]{geometry} 
\usepackage{enumitem}

\usepackage{longtable}
\usepackage[dvipsnames]{xcolor}
\usepackage{colortbl}
\usepackage{tikz}
\usetikzlibrary{positioning}
\usepackage{graphicx}
\usepackage{xypic}

\usepackage{faktor}
\theoremstyle{definition}
\newtheorem{theorem}{Theorem}[section]

\newtheorem{proposition}[theorem]{Proposition}
\newtheorem{corollary}[theorem]{Corollary}
\newtheorem{lemma}[theorem]{Lemma}

\newtheorem{definition}[theorem]{Definition}
\newtheorem{example}[theorem]{Example}
\newtheorem{remark}[theorem]{Remark}

\newcommand{\K}{\mathbb K}
\newcommand{\F}{\mathbb F}
\newcommand{\E}{\mathbb E}

\newcommand{\N}{\mathbb N}
\newcommand{\LL}{\mathbb L}

\newcommand{\Fm}{\mathbb F_{q^m}}

\newcommand{\C}{\mathcal C}
\newcommand{\mS}{\mathcal S}
\newcommand{\mB}{\mathcal B}
\newcommand{\mU}{\mathcal U}
\newcommand{\mR}{\mathcal R}
\newcommand{\Ev}{\mathrm{Ev}}

\newcommand{\bfn}{\mathbf{n}}

\DeclareMathOperator{\wt}{wt}
\DeclareMathOperator{\rk}{rk}
\DeclareMathOperator{\SR}{srk}
\DeclareMathOperator{\dd}{d}
\DeclareMathOperator{\HH}{H}
\DeclareMathOperator{\XX}{x}
\DeclareMathOperator{\SX}{sx}
\DeclareMathOperator{\RR}{rk}
\DeclareMathOperator{\Gal}{Gal}
\DeclareMathOperator{\GL}{GL}
\DeclareMathOperator{\Aut}{Aut}
\DeclareMathOperator{\rowsp}{rowsp}
\DeclareMathOperator{\ord}{ord}
\DeclareMathOperator{\LIS}{LIS}
\DeclareMathOperator{\diag}{diag}

\title{Sum-rank product codes and bounds on the minimum distance}
\author[G. N. Alfarano]{Gianira N. Alfarano$^1$}
\address{$^1$University of Zurich, Switzerland}
\curraddr{}
\email{gianiranicoletta.alfarano@math.uzh.ch}

\author[F. J. Lobillo]{F. J. Lobillo$^2$}
\address{$^2$University of Granada, Spain}
\curraddr{}
\email{jlobillo@ugr.es}

\author[A. Neri]{Alessandro Neri$^3$}
\address{$^3$Max-Planck-Institute for Mathematics in the Sciences, Leipzig}
\curraddr{}
\email{alessandro.neri@mis.mpg.de}

\author[A. Wachter-Zeh]{Antonia Wachter-Zeh$^4$}
\address{$^4$Technical University of Munich, Germany}
\curraddr{}
\email{antonia.wachter-zeh@tum.de}

\begin{document}

\maketitle

\begin{abstract}
    The tensor product of one code endowed with the Hamming metric and one endowed with the rank metric is analyzed. This gives a code which naturally inherits the sum-rank metric. Specializing to the product of a cyclic code and a skew-cyclic code, the resulting code turns out to belong to the recently introduced family of cyclic-skew-cyclic. A group theoretical description of these codes is given, after investigating the semilinear isometries in the sum-rank metric. Finally, a generalization of the Roos and the Hartmann-Tzeng bounds for the sum rank-metric is established, as well as a new lower bound on the minimum distance of one of the two codes constituting the product code.
\end{abstract}

%----------------------------------------------------
\section{Introduction}

Due to their rich algebraic structure, \emph{cyclic codes} represent undoubtedly one of the most studied families of linear codes. Indeed, their polynomial interpretation as ideals of $\F[z]/(z^n-1)$ is central to the efficiency of encoding and decoding procedures, and it also allows to design codes with desired properties.
% One of the most studied families of linear code is the one of \emph{cyclic codes}. Their importance is due to the polynomial interpretation as ideals of $\F[z]/(z^n-1)$, where $\F$ is an arbitrary field. 
Among these properties, several bounds on the minimum distance of cyclic codes have been established by using sets of consecutive elements in the \emph{defining set}, which is the collection of the zeros of all the polynomials in the code. Out of the most famous bounds, we can  certainly find  the BCH \cite{bose1960class, bose1960further, hocquenghem1959codes}, the Hartmann-Tzeng (HT) \cite{hartmann1972generalizations}  and the Roos bound \cite{roos1983new, roos1982generalization}. 
More recently, generalizations of these types of bounds have been proposed by embedding the code into a cyclic product code; see \cite{zeh2013generalizing}. The idea of \cite{zeh2013generalizing} is that one can find a lower bound on the minimum distance of a cyclic code by using a second auxiliary cyclic code and apply the HT bound to the minimum distance of the obtained product cyclic code. The bound obtained in this way is more general, and the  HT bound for cyclic codes represents only a special instance. 

Starting from the work of Boucher, Geiselmann and Ulmer  \cite{boucher2007skew}, the notion of cyclicity has been then extended to codes defined over a skew polynomial ring; see also  \cite{boucher2009codes, boucher2009coding, chaussade2009skew}. In particular, in \cite{chaussade2009skew} a BCH bound for skew-cyclic codes has been generalized, which works for both the Hamming and the rank metric.
Later, in \cite{Gomez/Lobillo/Navarro/Neri:2018} the skew version of the HT bound has been proposed and in \cite{alfarano2021roos} a Roos-like bound for skew-cyclic codes in the Hamming and in the rank metric has been established. 
%From these works, it emerged that for skew-cyclic codes it is natural to consider the rank metric. 
Moreover, these results confirmed that the rank metric is the natural measure inherited from the skew-polynomial structure of skew-cyclic codes.

Recently, codes endowed with the sum-rank metric have become popular due to their use in different applications; see for example \cite{nobrega2010multishot, martinez2019reliable, lu2005unified, martinez2019universal, mahmood2016convolutional}. From the theoretical point of view, the sum-rank metric is a generalization of both the Hamming metric and the rank metric. 
%In other words, the sum-rank metric recovers the other two as particular cases. 
Formally, given a field extension $\F/\E$, the sum-rank metric is defined on $\F^n$, with respect to a fixed partition $(n_1,\ldots,n_{\ell})$ of length $n$. More precisely, the sum-rank of a vector {$v=(v^{(1)},\ldots, v^{(\ell)})$}, where $v^{(i)}\in\F^{n_i}$, is defined as the sum of the rank weights of the $v_i$ with respect to the extension $\F/\E$. When $\ell=1$, the sum-rank metric coincides with the rank metric on $\F^{n}$, and when $n_1=\ldots=n_{\ell}=1$ it coincides with the Hamming metric on $\F^{\ell}$. In 2020, Mart\'inez-Pe\~{n}as introduced the family of \emph{cyclic-skew-cyclic} codes endowed with the sum-rank metric; see \cite{martinez2020sum}. These codes coincide with cyclic codes when the sum-rank metric recovers the Hamming metric and with skew-cyclic codes when the sum-rank metric recovers the rank metric. 

\medskip 

Inspired by the work of  Mart\'inez-Pe\~{n}as \cite{martinez2020sum} and by the bounds found in \cite{zeh2013generalizing}, we consider the tensor product of a cyclic code endowed with the Hamming metric with a skew-cyclic code endowed with the rank-metric, both defined over the same  field $\F$. Such a product code turns out to be a cyclic-skew-cyclic code, which naturally inherits the sum-rank metric. For this reason, we first provide generalizations of the Roos bound and the Hartmann-Tzeng bound for cyclic-skew-cyclic codes endowed with the sum-rank metric. This was listed among the open problems left in \cite{martinez2020sum} as a result of independent interest. Furthermore, we study the  tensor product mentioned above and use the  generalized Roos and Hartmann-Tzeng bounds in order to derive new lower bounds  on the minimum Hamming distance of the  constituent cyclic code and on the minimum rank distance of the constituent skew-cyclic code.

\medskip

The paper is structured as follows. Section \ref{sec:preliminaries} contains a brief recap on the notions and results needed for the rest of the paper. There, we recall the Hamming, rank and sum-rank metric, as well as what is a cyclic-skew-cyclic code and its representation in a skew polynomial ring.  
In Section \ref{sec:isometries} we investigate the semilinear isometries for the sum-rank metric with respect to any partition of the length. This is needed for providing a group-theoretical description of a cyclic-skew-cyclic code, which is a property invariant under the equivalence relation defined by the semilinear isometries. Section \ref{sec:Roos_HT_bounds} focuses on lower bounds for the minimum sum-rank distance of a cyclic-skew-cyclic code. We provide  the analogues of the  Roos bound and of the  Hartmann-Tzeng bound, exploiting the notion of defining sets for cyclic-skew-cyclic codes. In Section \ref{sec:product_codes} we focus on tensor products of codes. We combine a cyclic code in the Hamming metric with a skew-cyclic code in the rank metric, and study the resulting product code with respect to the sum-rank metric. Finally, in Section \ref{sec:product_bounds} we combine the results on tensor product of codes with the Roos and the Hartmann-Tzeng bounds in order to obtain new lower bounds on the minimum distances of the constituent codes.

\section{Preliminaries and Setting}\label{sec:preliminaries}
In this section we fix the framework  that we are going to use in the rest of the paper and we recall the main tools and mathematical background needed for our purposes.

Let $m,s$ be positive integers, let $\E$ be a field and consider $\K/\E$ and $\F/\E$  extension fields of finite degree, respectively $h$ and $m$, such that $\K\cap\F=\E$. Moreover, let $\LL:=\F\K$. For convenience of the reader we will summarize the fields containment in Figure \ref{fig:diagram}, which will be recalled whenever it is necessary in the rest of the paper.
\begin{figure}[!h]
\begin{center}
\begin{tikzpicture}[node distance = 2cm, auto]
      \node (K) {$\LL$};
      \node (F) [below left of=K] {$\F$};
      \node (G) [below right of=K] {$\K$};
      \node (E) [below right of=F] {$\E$};
      \draw[-] (K) to node [swap] {\small{$h$}} (F);
      \draw[-] (F) to node [swap] {\small{$m$}} (E);
      \draw[-] (K) to node  {\small{$m$}} (G);
      \draw[-] (G) to node  {\small{$h$}} (E);
      
%      \draw[-, bend right] (K) to node [swap] {$n$} (E);
      \end{tikzpicture}
    
\end{center}
\caption{Field extensions and their degree.}\label{fig:diagram}
\end{figure}
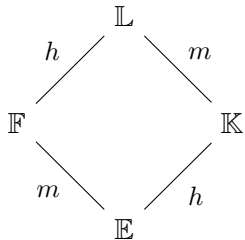
Finally, let $\ell$ be a positive integer and assume that $x^\ell-1$ splits into linear factors in $\K$, i.e. $\K$ contains all the $\ell$-th roots of unity.

% The main object that we are going to defined are codes that have as underlying field $\F$, but we will use also the other fields defined above.
Note that throughout the paper, the field $\F$ will essentially always be the defining field for our linear codes. Nevertheless, we will also make a wide use of the other auxiliary fields, in order to control parameters and properties of such codes.

\subsection{The Hamming and the Rank metric}
%Given the field extension $\LL/\K$, we consider the vector space $\LL^\ell$, for $\ell$ positive integer. 
The \emph{Hamming weight}  on $\F^\ell$ is defined as 
$$ \begin{array}{rccl}
\wt_{\HH}: & \F^\ell & \longrightarrow & \N \\
&c&\longmapsto&  |\{i : c_i\ne 0\}|,\end{array}$$
i.e., $\wt_{\HH}(c)$ is the number of coordinates in $c$ that are different from 0. 

The Hamming weight induces a metric on $\F^\ell$, which is defined as $\dd_{\HH}(u,v):=\wt_{\HH}(u-v)$ for any $u,v\in\F^\ell$.
An $[\ell,k,d]_\F$ (\emph{Hamming metric}) \emph{code} $\C$  is a $k$-dimensional $\F$-subspace of $\F^\ell$ and $d=d_{\HH}(\C) := \min\{\dd_{\HH}(u,v) \mid u,v\in \C, u\ne v\}$. If $d$ is not known, we simply write $[\ell,k]_\F$ code. The parameter $\ell$ is the \emph{length} of $\C$, $k$ is its \emph{dimension} and $d$ is called \emph{minimum Hamming distance} of $\C$.
 
Another well-known metric is the rank metric. Let $\F/\E$ be the field extension defined above and let $N$ be a positive integer. The \emph{rank weight} for $\F/\E$ is defined as the following map:

$$ \begin{array}{rccl}
\wt_{\RR}^{\F/\E}: & \F^N & \longrightarrow & \N \\
&c&\longmapsto&  \dim_\E(\langle c_0,\ldots,c_{N-1} \rangle_\E).\end{array}$$
Also in this case, the rank weight $\wt_{\RR}^{\F/\E}$ induces a metric on the space $\F^N$ with respect to the field extension $\F/\E$, known as \emph{rank distance} and defined as $\dd_{\RR}^{\F/\E}(u,v):=\wt_{\RR}^{\F/\E}(u-v)$ for any $u,v\in\F^N$. Furthermore, we define an $[N, k, d]_{\F/\E}$ \emph{rank-metric code} $\C$ as a $k$-dimensional $\F$-subspace of $\F^N$, endowed with the rank metric. As for the Hamming metric case, the integer $N$ is called the \emph{length} of $\C$, $k$ is the \emph{dimension} of $\C$ and $d$ is defined as
$$d = \dd_{\RR}^{\F/\E}(\C) :=\min\{d_{\RR}^{\F/\E}(u, v) \mid u, v \in \C, u\ne v\}$$ and is called the \emph{minimum rank distance} of $\C$. If $d$ is not known we simply write that $\C$ is an $[N,k]_{\F/\E}$ rank-metric code.

Clearly, one may also define the rank metric for other extensions. In this paper we will only consider it with respect to $\F/\E$ and to $\LL/\K$. It will always be clear from the context which extension we will be considering, and, in order to lighten the notation, from now on we will omit the field extension  from the notation of rank weight and rank distance, writing only $\wt_{\RR}$ and $\dd_{\RR}$, respectively.

\subsection{The sum-rank metric}

In this subsection, we recall another metric, which is the natural combination of the Hamming and the rank ones. We will use the same field extensions defined in Figure \ref{fig:diagram}.

Consider the field extension $\F/\E$ and choose a partition   of $n$ as $n=n_1+\ldots+n_{\ell}$, which we denote by $\mathbf{n} = (n_1,\ldots,n_{\ell})$.
For a given vector $c\in \F^n$, this induces a partition of $c$ as
\begin{equation}\label{eq:vector_partition}c=(c^{(0)} \mid \ldots \mid c^{(\ell-1)}),\end{equation}
where $c^{(i)}=(c^{(i)}_0,\ldots,c^{(i)}_{n_i-1})\in \F^{n_i}$, for every $i \in \{0,\dots, \ell-1\}$. 
With this notation in mind, we define the \emph{sum-rank weight} for $\F/\E$ with respect to the partition $\mathbf{n}=(n_1,\ldots,n_{\ell})$ of $n$ as the function
$$ \begin{array}{rccl}
\wt_{\SR}^{\mathbf{n}, \F/\E}: & \F^n & \longrightarrow & \N \\
&c&\longmapsto& 
%\sum\limits_{i=1}^\ell \dim_\K(\langle c^{(i)}_1,\ldots,c^{(i)}_N \rangle_\K)=
\sum\limits_{i=0}^{\ell-1}\wt_{\RR}(c^{(i)}).\end{array}$$
The sum-rank weight induces a metric on $\F^n$ defined as $\dd_{\SR}^{\mathbf{n}, \F/\E}(u,v):=\wt_{\SR}^{\mathbf{n}, \F/\E}(u-v)$ for any $u,v \in \F^n$, which we  call  \emph{sum-rank distance}.

Clearly, as for the rank metric, the same holds if we replace $\F/\E$ with $\LL/\K$. In this case we can define the sum-rank weight with respect to $\LL/\K$, and we denote it as
$$\wt_{\SR}^{\mathbf{n}, \LL/\K} : \LL^n \longrightarrow \N.$$
In the rest of the paper, these two extensions will be the only ones considered when computing the sum-rank metric. Moreover, from now on, whenever the field extension is known, it will be omitted from the notation of sum-rank weight and sum-rank distance, and we will only write $\wt_{\SR}^{\mathbf{n}}$ and $\dd_{\SR}^{\mathbf{n}}$.

Note that sum-rank distance recovers the Hamming distance and the rank distance by setting $n_1=\ldots=n_{\ell}=1$ and $\ell=1$, respectively.

\begin{definition}
Let $\F/\E$ be an arbitrary field extension and fix a partition of $n$ to be $\mathbf{n}=(n_1,\dots,n_r)$.
A (\emph{sum-rank metric}) \emph{code} $\C$ is an $\F$-linear subspace of $\F^n$ endowed with the sum-rank metric. The \emph{minimum sum-rank distance} is defined as usual as $\dd_{\SR}^\mathbf{n}(\C)=\min\{\wt_{\SR}^\mathbf{n}(c) \mid c\in \C, c\ne 0\}$ or equivalently $\dd_{\SR}^\mathbf{n}(\C)=\min\{\dd_{\SR}^\mathbf{n}(u,v) \mid u,v \in \C, u\ne v\}$, where $\dd_{\SR}^\mathbf{n}$ and $\wt_{\SR}^\mathbf{n}$ are defined with respect to $\F/\E$ and to the partition $\mathbf{n}$.
\end{definition}

In this work, the sum-rank metric codes that we are going to define have underlying field $\F$ and the fixed partition will always be $(\underbrace{N, \ldots, N}_{\ell\textrm{ times}})$, except for Section \ref{sec:isometries}, in which we deal  with a general partition $\mathbf{n}$.

\subsection{Skew polynomial rings}
We briefly recall the definition of skew polynomials rings. We refer to the seminal paper of Ore \cite{ore1933theory}, where  skew polynomials over division rings have been introduced.

Consider the field extension $\LL/\K$ and its Galois group $\Gal(\LL/\K)$. Let $\sigma \in \Gal(\LL/\K)$ be an automorphism of $\LL$.
We define the skew polynomial ring as the ring $\LL[z;\sigma]$ induced by $\sigma$ over $\LL$. The multiplication rule over $\LL[z;\sigma]$ is given by $z^iz^j = z^{i+j}$ and $za=\sigma(a)z $, for all $a\in \LL$. Since $\LL$ is a field,  $\LL[z;\sigma]$ is in particular a left and right Euclidean domain. 

%Observe that we will also restrict $\sigma$ to $\F$ and still call it $\sigma$. Whenever there is possibility of confusion, we will specify the field.

We also recall the evaluation of skew polynomials. A systematic approach to this concept was started in 1986 by Lam and Leroy. We refer the interested reader to \cite{lam1988vandermonde, lam2004wedderburn, lam2008wedderburn,Delenclos/Leroy:2007}. Here we are only interested in the following definition. Let $f(z)\in\LL[z;\sigma]$ and $a\in \LL$. We define the evaluation of $f(z)$ in $a$ as the unique element $f(a)$ such that 
\begin{equation}\label{eq:righteval}
    f(z)=q(z)(z-a)+f(a),
\end{equation}
where $q(z)\in\LL[z;\sigma]$.

\subsection{Cyclic-skew-cyclic codes}
In this subsection we give the definition of cyclic-skew-cyclic codes in their vector and polynomial representations for arbitrary fields. The results presented here can be found for finite fields in \cite{martinez2020sum}.

Assume that $\LL/\K$ is a cyclic Galois extension and let  $\sigma \in \Gal(\LL/\K)$.  Moreover, denote by $\theta$  the restriction $\sigma_{\mid_\F}$ of $\sigma$ to $\F$.
Recall that $\ell$ is a positive integer, such that the roots of $x^\ell-1$  belong to $\K$ and they are pairwise distinct.

We fix the setting of the previous subsection, choosing $n=\ell N$, and fixing the partition $\mathbf{n} =(N,\ldots,N)$ of $n$. Accordingly,  we partition every vector $c\in \F^n$ as in \eqref{eq:vector_partition}.

Define the \emph{block-shift operator} $\rho$ on $\F^n$ as
\begin{equation}\label{eq:rho}\rho((c^{(0)} \mid \ldots \mid c^{(\ell-1)}))=(c^{(\ell-1)}\mid c^{(0)}  \mid \ldots \mid c^{(\ell-2)}),\end{equation}
and the \emph{$\theta$-inblock shift operator} $\phi$ on $\F^n$ as
\begin{equation}\label{eq:phi}\phi((c^{(0)} \mid \ldots \mid c^{(\ell-1)}))=(\varphi(c^{(0)}) \mid \ldots \mid \varphi(c^{(\ell-1)})),\end{equation}
where $\varphi:\F^N\rightarrow \F^N$ is given by
$$ \varphi(v_0,\ldots,v_{N-1})=(\theta(v_{N-1}),\theta(v_0),\ldots, \theta(v_{N-2})).$$

With the setting fixed above, we recall the definition of cyclic-skew-cyclic codes, introduced in \cite{martinez2020sum}.

\begin{definition}
A code $\C\subseteq \F^n$ is called \emph{cyclic-skew-cyclic} if $\rho(\C)\subseteq \C$ and $\phi(\C)\subseteq \C$. These are actually equalities.
\end{definition}

\subsection{Skew polynomial representation of cyclic-skew-cyclic codes}

As above, assume that $\LL/\K$ is a cyclic Galois extension of degree $m$. Let  $\sigma \in \Gal(\LL/\K)$ and let $\theta$  be the restriction $\sigma_{\mid_\F}$ of $\sigma$ to $\F$. Moreover, consider the skew polynomial ring $\F[z;\theta]$.
% as the set of polynomials with coefficient in $\F$, endowed with the usual addition, and where the multiplication follows the following rules: $z^iz^j=z^{i+j}$ for every $i,j \in \N$, and $za=\sigma(a)z$, for every $a \in \F$. 
 Let $t$ be the order of $\theta$. It is well-known that the center of $\F[z;\theta]$ is $\F^\theta[z^{t}]$, where $\F^\theta$ denotes the subfield of elements which are fixed by $\theta$. Moreover, assume $t=m$, and $\F^\theta=\E$
 
%\textcolor{RoyalPurple}{From now on we will consider $\sigma$ to be a generator of $\Gal(\LL/\K)$,  so that $t=m$, $\theta = \sigma_{\mid \F}$ and $\F^\theta=\E$.} 
Moreover, we also assume that $m$ divides $N$. Then, $z^N-1$ belongs to the center of $\F[z;\theta]$ and therefore it generates a two-sided ideal. Define 
\begin{align*}
   \mS &:= \faktor{\F[z;\theta]}{(z^N-1)}, \\
    \mR &:=\faktor{\mS[x]}{(x^\ell-1)},
\end{align*}
where it is easy to see that also $(x^\ell-1)$ is a two-sided ideal of $\mS[x]$.

At this point there is a natural identification of $\F^n$ with $\mR$, given by the map $\mu : \F^n  \longrightarrow  \mR,$ such that
$$ \mu((c^{(0)} \mid \ldots \mid c^{(\ell-1)}))= \sum_{i=0}^{\ell-1} \bigg(\sum_{j=0}^{N-1} c_j^{(i)}z^j \bigg)x^i. $$

One can see that it is also possible to interchange the order of the variables and consider another polynomial representation for $\F^n$. Define
\begin{align*}
   \mS' &:= \faktor{\F[x]}{(x^\ell-1)}, \\
    \mR' &:=\faktor{\mS'[z;\theta]}{(z^N-1)},
\end{align*}
where we have extended $\theta$ to  $\theta:\F[x]\rightarrow \F[x]$, such  that $\theta(x)=x$. This map factors to a map from $\mS'$ to itself,
since $\theta(x^\ell-1)=x^\ell-1$.

The identification is then given by the map $\nu : \F^n  \longrightarrow  \mR',$  defined as
\begin{equation}\label{eq:nu}  \nu((c^{(0)} \mid \ldots \mid c^{(\ell-1)}))= \sum_{j=0}^{N-1} \bigg(\sum_{i=1}^\ell c_j^{(i)}x^i \bigg)z^j.
\end{equation}

The definitions of the maps $\mu$ and $\nu$ allow to give the following characterization of cyclic-skew-cyclic codes.

\begin{theorem}\cite[Theorem 1]{martinez2020sum}
 Let $\C\subseteq \F^n$ be a code. The following are equivalent.
 \begin{enumerate}
     \item $\C$ is a cyclic-skew-cyclic code.
     \item $\mu(\C)$ is a left ideal of $\mR$.
     \item $\nu(\C)$ is a left ideal of $\mR'$.
 \end{enumerate}
\end{theorem}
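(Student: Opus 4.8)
The plan is to prove the three-way equivalence by establishing that both $\mu$ and $\nu$ are $\F$-linear isomorphisms of $\F$-vector spaces, and then translating the two shift conditions defining a cyclic-skew-cyclic code into the corresponding multiplication operations in $\mathcal R$ and $\mathcal R'$. The key observation is that being a left ideal of $\mathcal R$ is equivalent to being an $\F$-subspace that is closed under left multiplication by the two natural generators of the ring modulo the center, namely $z$ and $x$. So I would first verify that $\mu$ and $\nu$ are bijective $\F$-linear maps, which follows because $\mathcal R$ and $\mathcal R'$ are free $\F$-modules with basis $\{z^j x^i\}$ (respectively $\{x^i z^j\}$) of cardinality $\ell N = n$, matching the dimension of $\F^n$.

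Next I would compute the effect of left multiplication on the images. For the equivalence of (1) and (2), the claim is that $\mu(\rho(c)) = x \cdot \mu(c)$ and $\mu(\phi(c)) = z \cdot \mu(c)$ in $\mathcal R = \mathcal S[x]/(x^\ell-1)$. The first identity is a direct computation: multiplying $\sum_i (\sum_j c_j^{(i)} z^j) x^i$ on the left by $x$ raises each exponent of $x$ by one and reduces $x^\ell$ to $1$, which is exactly the block-shift $\rho$ sending block $i$ to block $i+1 \bmod \ell$. The second identity uses the skew multiplication rule $z a = \theta(a) z$ in $\F[z;\theta]$: left-multiplying by $z$ applies $\theta$ to every coefficient and raises each $z$-exponent by one, reducing $z^N$ to $1$; one checks this matches $\varphi(v_0,\ldots,v_{N-1}) = (\theta(v_{N-1}),\theta(v_0),\ldots,\theta(v_{N-2}))$ blockwise, which is precisely $\phi$. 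Given these two identities, $\mathcal C$ is closed under $\rho$ and $\phi$ if and only if $\mu(\mathcal C)$ is closed under left multiplication by $x$ and by $z$; since $\mu(\mathcal C)$ is already an $\F$-subspace and $x,z$ together generate $\mathcal R$ as a ring over $\F$, this is equivalent to $\mu(\mathcal C)$ being a left ideal.

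For the equivalence of (1) and (3) I would carry out the analogous computation for $\nu$, checking that $\nu(\rho(c)) = x \cdot \nu(c)$ and $\nu(\phi(c)) = z \cdot \nu(c)$ in $\mathcal R' = \mathcal S'[z;\theta]/(z^N-1)$. Here one must be careful that $\theta$ has been extended to $\mathcal S' = \F[x]/(x^\ell-1)$ by fixing $x$, so that left multiplication by $z$ in $\mathcal R'$ applies $\theta$ to the $\F$-coefficients while leaving the powers of $x$ untouched, again matching $\phi$; and left multiplication by $x$ commutes past $z$ (since $\theta(x)=x$) and permutes the blocks cyclically, matching $\rho$. The same ideal-theoretic argument then gives (1) $\Leftrightarrow$ (3). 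Alternatively, one could observe that $\nu = \mu$ followed by the ring isomorphism $\mathcal R \to \mathcal R'$ interchanging the two variables, but I would prefer the direct verification to avoid checking that this swap is well-defined with respect to the skew relation.

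The main obstacle I anticipate is purely bookkeeping in the skew case: making sure the direction in which $\theta$ acts and the cyclic shift on indices are consistent with the slightly unusual definition of $\varphi$, which shifts the $\theta$-twisted coordinates to the right rather than the left, and in correctly tracking the reduction $z^N \equiv 1$ together with the noncommutative rule $za = \theta(a)z$. Everything else is routine linear algebra, so the proof reduces to stating the two multiplication identities for each map and invoking the fact that left-ideal-ness is equivalent to $\F$-linear closure under the generators $x$ and $z$.
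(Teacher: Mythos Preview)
The paper does not supply its own proof of this theorem: it is quoted verbatim from \cite{martinez2020sum} (as Theorem~1 there) and no argument is reproduced. So there is nothing in the present paper to compare your proposal against.

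That said, your proof sketch is correct and is exactly the standard argument (and essentially the one in \cite{martinez2020sum}). The two identities $\mu(\rho(c))=x\cdot\mu(c)$ and $\mu(\phi(c))=z\cdot\mu(c)$ in $\mathcal R$ hold by the computations you indicate, and likewise for $\nu$ in $\mathcal R'$; the passage from ``$\F$-subspace stable under left multiplication by $x$ and $z$'' to ``left ideal'' is immediate since $\F$, $x$ and $z$ generate the ring. Your caution about the direction of the $\theta$-shift in $\varphi$ is well placed but, as you noted, the conventions are consistent once one unwinds $z\,c_j^{(i)}z^j=\theta(c_j^{(i)})z^{j+1}$ together with $z^N=1$. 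Nothing is missing.
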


\begin{remark}\label{rem:PIR}
Note that, $\mR$ and $\mR^\prime$ are (canonically) isomorphic and moreover, whenever the characteristic of $\F$ does not divide $\ell$, $\mR^\prime$ (and hence $\mR$) are principal ideal rings; see \cite{martinez2020sum}. This implies that a cyclic-skew-cyclic code has a generator skew polynomial in $\mS^\prime[z;\theta]$.
\end{remark}

\subsection{Defining set and BCH bound}
In this subsection we define the evaluation map on the $\ell$-th roots of unity in order to introduce the defining set of skew-cyclic-codes and the sum-rank BCH bound. Also for this part, the definitions and the results specialised to the case of finite fields were provided in \cite{martinez2020sum}.

Assume $N=jm$, for some  positive integer $j$ and let $a\in \K$ be an $\ell$-th root of unity. Define the following ring morphism:
\begin{align}\label{eq:skeweval}
 \Ev_{a,z} : \faktor{\left(\faktor{\LL[x]}{(x^\ell - 1)}\right)[z;\sigma]}{(z^N-1)} \longrightarrow \faktor{\left(\faktor{\LL[x]}{(x-a)}\right)[z;\sigma]}{(z^N-1)},   
\end{align}
such that $\Ev_{a,z}(f_0(x)+ \dots + f_{N-1}(x)z^{N-1}) = f_0(a) + \dots + f_{N-1}(a)z^{N-1}$. It is easy to verify that $\Ev_{a,z}$ is a ring homomorphism, that is 
\begin{equation}\label{eq:evaluation_ring_homo}
    \Ev_{a,z}(f(x,z)g(x,z))=\Ev_{a,z}(f(x,z))\Ev_{a,z}(g(x,z)), 
\end{equation}
due to the fact that $a\in\K$ is fixed by $\sigma$; see also \cite{martinez2020sum}.
Moreover, note that 
$$\faktor{\left(\faktor{\LL[x]}{(x-a)}\right)[z;\sigma]}{(z^N-1)}\cong \faktor{\LL[z;\sigma]}{(z^N-1)}.$$

Let $\beta\in \LL^\ast$ and define the following evaluation map:
\begin{align}\label{eq:linearizedeval}
    \Ev_{\beta}^\sigma: \faktor{\LL[z;\sigma]}{(z^N-1)}\longrightarrow \faktor{\LL[z;\sigma]}{(z-\sigma(\beta)\beta^{-1})},
\end{align}
where $\Ev_{\beta}^\sigma(g(z)) = g(\sigma(\beta)\beta^{-1})$, in the sense of \eqref{eq:righteval}.

Finally, by noting that $\faktor{\LL[z;\sigma]}{(z-\sigma(\beta)\beta^{-1})} \cong \LL$, we can define the composition of the maps defined in \eqref{eq:skeweval} and \eqref{eq:linearizedeval} as follows.

\begin{definition}\label{def:totaleval}
Let $a\in \K$ be an $\ell$-th root of unity and $\beta \in \LL^\ast$. Define the \emph{total evaluation map}
\begin{align}\label{eq:toteval}
    \Ev_{a,\beta} :  \faktor{\left(\faktor{\LL[x]}{(x^\ell - 1)}\right)[z;\sigma]}{(z^N-1)} \longrightarrow \faktor{\left(\faktor{\LL[x]}{(x-a)}\right)[z;\sigma]}{(z-\sigma(\beta)\beta^{-1})}
\end{align}
as the composition map $\Ev_\beta^\sigma \circ \Ev_{a,z}.$
\end{definition}

\begin{remark}
 To a skew polynomial $f(z)=\sum_{i=0}^r f_i z^i\in \LL[z;\sigma]$ we can associate a $\sigma$-polynomial $f^\sigma(\sigma) = \sum_{i=0}^r f_i\sigma^i \in \LL[\sigma]$. Here, the evaluation in $\beta \in \LL$ is given by 
 $$f^\sigma(\beta):=\sum_{i=0}^r f_i\sigma^i(\beta).$$
 With this in mind, it holds that $\Ev_{\beta}^\sigma(f)=f(\sigma(\beta)\beta^{-1}) = f^\sigma(\beta)\beta^{-1}$. In other words, the evaluation of a skew polynomial can be related to the usual evaluation of its associated $\sigma$-polynomial.
 \end{remark}
 
 \begin{example}
  In the case of finite fields, when $\LL=\Fm$ and $\sigma(\beta)=\beta^q$ is the $q$-Frobenius automorphism, we have that the $\sigma$-polynomial $f^\sigma$ is the classical linearized polynomial $f^\sigma(y) = \sum_{i=0}^r f_iy^{q^i}$.
 \end{example}
 
In the following, we give the definition of the defining set for a cyclic-skew-cyclic code, as in \cite{martinez2020sum}. As for cyclic codes, it is defined as the set of the zeros of the  generator polynomial (in this case a skew polynomial) which define the code.

\begin{definition}\label{def:defset}
Let $\C\subseteq\mR^\prime$ be a cyclic-skew-cyclic code with generator skew polynomial $g(x,z)\in \mS^\prime[z;\theta]$. Then, the \emph{defining set} $T_\C$ of $\C$ is the set
$$T_\C :=\{(a,\beta) \in\K\times \LL^\ast \mid a^\ell = 1, \Ev_{a,\beta}(g(x,z)) = 0\}. $$

\end{definition}

Finally, we recall the sum-rank BCH bound for cyclic-skew-cyclic codes obtained in \cite{martinez2020sum}. For this result we require that $m=N$.

\begin{theorem}[Sum-rank BCH bound \textnormal{\cite[Theorem 7]{martinez2020sum}}]
Let $b,\delta, r,t$ be integers, such that $\gcd(n,t)=1$. Let $a\in\K$ be a primitive $\ell$-th root of unity and $\beta$ be a normal element of $\LL/\K$. Let $n=m\ell$ and $\C\subseteq\F^n$ be a cyclic-skew-cyclic code. 
If 
$$
\{(a^{b+it}, \sigma^{it}(\beta))\in \K \times \LL^\ast \, : \,  0\leq i\leq \delta-2\}\subseteq T_\C,
$$
 then $\dd_{\SR}(\C)\geq \delta$.
\end{theorem}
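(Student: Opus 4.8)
The plan is to reduce the sum-rank BCH bound to the classical skew-cyclic (rank-metric) BCH bound by exploiting the evaluation maps and the ring-homomorphism property \eqref{eq:evaluation_ring_homo}. The key observation is that the total evaluation map $\Ev_{a,\beta}$ factors through the partial evaluation $\Ev_{a,z}$, which sends the two-variable skew polynomial ring to a univariate skew polynomial ring $\faktor{\LL[z;\sigma]}{(z^N-1)}$ after specializing $x$ to the root of unity $a$. Since $a^\ell=1$ and $a\in\K$ is fixed by $\sigma$, substituting $x=a^j$ for the various powers $j=b,b+t,\dots,b+(\delta-2)t$ produces $\delta-1$ distinct univariate skew polynomials, and the hypothesis on $T_\C$ says precisely that $\beta, \sigma^t(\beta),\dots,\sigma^{(\delta-2)t}(\beta)$ are roots (in the skew sense) of the corresponding specializations of the generator $g(x,z)$.

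First I would make the partition-to-rank translation explicit. Given a nonzero codeword $c\in\C$ with associated polynomial $\nu(c)=\sum_{j}(\sum_i c_j^{(i)}x^i)z^j$, I would apply $\Ev_{a^{b+it},z}$ for each $i$ to obtain univariate skew polynomials $g_i(z)\in\faktor{\LL[z;\sigma]}{(z^N-1)}$. The point is that the sum-rank weight of $c$ (with respect to the partition $(N,\dots,N)$ and the extension $\LL/\K$) decomposes blockwise as a sum of rank weights, and each block corresponds, under the evaluation at a fixed $\ell$-th root of unity, to a single skew-cyclic codeword whose rank weight is controlled by the number of consecutive $\sigma$-conjugate roots it possesses. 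Because $a$ is a \emph{primitive} $\ell$-th root of unity and $\gcd(n,t)=1$, the exponents $a^{b+it}$ sweep through distinct roots of unity across the blocks, so the roots $(a^{b+it},\sigma^{it}(\beta))$ genuinely constrain the joint structure rather than a single block.

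The core of the argument is then the linear-algebra lower bound coming from a generalized Vandermonde / Moore-type matrix. Using the remark relating $\Ev_\beta^\sigma$ to the $\sigma$-polynomial evaluation, $f(\sigma(\beta)\beta^{-1})=f^\sigma(\beta)\beta^{-1}$, the condition that $(a^{b+it},\sigma^{it}(\beta))$ lies in $T_\C$ becomes a system of $\delta-1$ linear equations on the coefficients of the evaluated codeword. I would assemble these into a matrix whose entries are products of powers of $a$ (from the $x$-evaluation) and $\sigma$-conjugates of $\beta$ (from the $z$-evaluation). The standard fact that $\sigma$-conjugates of a normal element $\beta$ are $\K$-linearly independent — equivalently, the associated Moore matrix is nonsingular — forces any nonzero codeword to have at least $\delta$ nonzero ``$(\text{root-of-unity}, \sigma^{it}(\beta))$'' support, which translates back into a sum-rank weight of at least $\delta$ via the blockwise rank computation.

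The hard part will be bookkeeping the interaction between the two independent shift structures: the $x$-variable (Hamming/cyclic, governed by $\ell$-th roots of unity in $\K$) and the $z$-variable (rank/skew-cyclic, governed by $\sigma$-conjugacy in $\LL$). Specifically, I expect the main obstacle to be verifying that the $\delta-1$ pairs $(a^{b+it},\sigma^{it}(\beta))$ yield \emph{independent} constraints across blocks — this is where $\gcd(n,t)=1$, the primitivity of $a$, and the normality of $\beta$ must all be used together to guarantee that the relevant evaluation matrix has full rank. Once that nonsingularity is established, the passage from ``$\delta-1$ vanishing evaluations $\Rightarrow$ sum-rank weight $\geq\delta$'' follows by the same contrapositive argument as in the classical BCH bound, since a codeword of sum-rank weight $\leq\delta-1$ would produce a nontrivial dependency among fewer than $\delta$ columns of a matrix we have shown to be generically nonsingular.
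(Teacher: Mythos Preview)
The paper does not prove this theorem directly --- it is quoted from \cite{martinez2020sum} in the preliminaries --- but it does prove the Roos bound (Theorem~\ref{thm:Roos_SR}), of which the BCH bound is the special case $r=0$. So the relevant comparison is with that proof.

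Your outline contains a real gap in the last step. You set up the evaluation matrix correctly and you are right that the hypothesis gives $\delta-1$ linear constraints on the coefficients of any codeword. But you then assert that this ``translates back into a sum-rank weight of at least $\delta$ via the blockwise rank computation'', and earlier that ``each block corresponds, under the evaluation at a fixed $\ell$-th root of unity, to a single skew-cyclic codeword''. Neither of these is correct as stated. Evaluating at $x=a^{b+it}$ does not isolate a block: it produces a \emph{mixture} $\sum_h c_j^{(h)} a^{(b+it)h}$ of all $\ell$ blocks. Full rank of the $(\delta-1)$-row evaluation matrix shows only that no nonzero codeword has \emph{Hamming} weight $\le \delta-1$; it says nothing directly about sum-rank weight, which can be much smaller than Hamming weight.

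The device the paper uses to close this gap is Theorem~\ref{thm:SRminHamm}: $\dd_{\SR}(\C)=\min_A \dd_{\HH}(\C\cdot A)$ where $A$ ranges over all block-diagonal matrices $\diag(A_0,\dots,A_{\ell-1})$ with $A_i\in\GL(m,\E)$. The crucial computation (in the proof of Theorem~\ref{thm:Roos_SR}) is that right-multiplying by such an $A$ replaces the normal basis $(\beta,\sigma(\beta),\dots,\sigma^{m-1}(\beta))$ in each block by an arbitrary $\K$-basis $(\alpha_1^{(i)},\dots,\alpha_m^{(i)})$ of $\LL$, and the rank lemmas (Lemmas~\ref{lem:rankDi-basecase} and~\ref{lem:rankDi}, ultimately resting on linearized Reed--Solomon codes being MSRD) still guarantee that any $\delta-1$ columns of the transformed evaluation matrix are independent. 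Hence $\dd_{\HH}(\C\cdot A)\ge\delta$ uniformly in $A$, and taking the minimum gives $\dd_{\SR}(\C)\ge\delta$. Your proposal is missing both this reduction and the observation that the rank argument survives an arbitrary block-wise change of basis; without them, the final implication does not go through.
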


\section{Isometries and Equivalence of Sum-Rank Metric Codes}\label{sec:isometries}

In this section we describe the group of semilinear isometries for the sum-rank metric. This concept is known for the Hamming metric and also for the rank-metric and it is fundamental for describing equivalence of codes. 

In order to be more consistent with the usual notation for permutations, in contrast with the rest of the paper in this section we use the vectors indices starting from $1$ instead of $0$. Also, in this section we consider a general partition  $\mathbf{n} = (n_1,\dots,n_l)$ of the integer $n$.
\subsection{Equivalence of Hamming metric codes}
    Here we recall the notion of equivalence of codes in the Hamming metric and their automorphisms groups. Fix the field $\F$ and consider the metric space $(\F^\ell,\dd_{\HH})$. It is well-known that the semilinear isometries of this space form a group given by $(\F^*)^\ell \rtimes (\Aut(\F)\times \mathcal S_\ell)$, where $\Aut(\F)$ denotes the group of field automorphisms of $\F$ and $\mathcal S_\ell$ the symmetric group on $\ell$ elements. The action of this group on $\F^\ell$ is described here:
    $$\begin{array}{rcl}
    ((\F^*)^\ell \rtimes (\Aut(\F)\times \mathcal S_\ell)) \times \F^\ell & \longrightarrow & \F^\ell \\
    ((\mathbf{a},\theta,\pi),v) &\ \longmapsto & (a_{1}\theta(v_{\pi^{-1}(1)}),\ldots, a_{\ell}\theta(v_{\pi^{-1}(\ell)})).
    \end{array}$$
    Such an action can be naturally extended to an action on the set $\mathcal L(\F^\ell)=\{ \C \subseteq \F^\ell: \C \mbox{ is } \F\mbox{-linear }\}$ of $\F$-subspaces of $\F^\ell$, denoted by
    $$ \Psi_{\HH}: ((\F^*)^\ell \rtimes (\Aut(\F)\times \mathcal S_\ell)) \times \mathcal L(\F^\ell)  \longrightarrow  \mathcal L(\F^\ell).$$
    
    \begin{definition}
     Two $[\ell,k]_\F$ codes are called \emph{semilinearly (Hamming) equivalent} if they belong to the same orbit of the action $\Psi_{\HH}$.
    \end{definition}
    
    Notice that in the literature it is more often considered  the notion of \emph{monomial equivalence}. This corresponds to the orbits of the action induced by the subgroup $(\F^*)^\ell\rtimes \mathcal S_\ell$, which only considers the linear isometries. Furthermore, the notion of \emph{permutation equivalence} is obtained by only taking the action of the symmetric group $\mathcal S_\ell$.

    \begin{definition}
     The \emph{automorphism (Hamming) group} of an $[\ell,k]_\F$ code $\C$ endowed with the Hamming metric is the stabilizer of $\C$ in $((\F^*)^\ell \rtimes (\Aut(\F)\times \mathcal S_\ell))$ under the action of $\Psi_{\HH}$, and we denote it by $\Aut_{\HH}(\C)$.
    \end{definition}
    
    \begin{remark}\label{rem:cyclicity}
    Let $\bar{\pi}$ be the $\ell$-cycle $(1\ldots \ell)$. With this notion, we can see that a code $\C$ is cyclic if and only if $(\mathbf{1},\mathrm{id},\bar{\pi})\in \Aut_{\HH}(\C)$. However, when studying properties of a code which are invariant under code equivalence,  it is more convenient to consider when a code $\C$ is \emph{equivalent to a cyclic code}, that is, when in the orbit of $\C$ there exists a cyclic code. This corresponds to the property that there exists an element $\mathfrak g\in ((\F^*)^\ell \rtimes (\Aut(\F)\times \mathcal S_\ell))$ such that $\mathfrak g^{-1}(\mathbf{1},\mathrm{id},\bar{\pi})\mathfrak g\in\Aut_{\HH}(\C)$. Moreover, it is clear that the subgroup $\Aut(\F)$ leaves invariant the property of a code being cyclic. Hence, we can assume that the element $\mathfrak g$ is of the form $\mathfrak g=(\mathbf{a},\mathrm{id},\tau)$.
    
    \noindent In conclusion, since the conjugacy class of an $\ell$-cycle in $\mathcal S_\ell$ is the set of all $\ell$-cycles, we have that  $\C$ is (equivalent to) a cyclic code if and only if there exists a $\mathbf{b}\in (\F^*)^\ell$, and an $\ell$-cycle $\pi$ such that $(\mathbf{b},\mathrm{id},{\pi})\in\Aut_{\HH}(\C)$.
    \end{remark}

 \subsection{Equivalence of rank-metric codes}    
    Now, we describe the notion of isometries and equivalence also for codes in the rank metric. Consider the metric space $(\F^N,\dd_{\RR})$, where $\dd_{\RR}$ denotes the rank metric over $\F/\E$. The semilinear isometries of this metric space are given by the group $(\F^*\times \GL(N,\E))\rtimes \Aut(\F)$, which is acting on $\F^N$ via
        $$\begin{array}{rcl}
   ((\F^*\times \GL(N,\E))\rtimes \Aut(\F)) \times \F^N & \longrightarrow & \F^N \\
    ((\alpha,M,\theta),v) &\ \longmapsto & (\theta(\alpha v_1),\ldots,\theta(\alpha v_N))M.
    \end{array}$$
       Also in this case we extend this action to an action on the set $\mathcal L(\F^N)=\{ \C \subseteq \F^N: \C \mbox{ is } \F\mbox{-linear }\}$ of $\F$-subspaces of $\F^N$, denoted by
    \begin{equation}\label{eq:PsisR} \Psi_{\RR}:  ((\F^*\times \GL(N,\E))\rtimes \Aut(\F)) \times \mathcal L(\F^N)  \longrightarrow  \mathcal L(\F^N).\end{equation}
    
        \begin{definition}
     Two $[N,k]_\F$ codes are called \emph{semilinearly (rank) equivalent} if they belong to the same orbit of the action $\Psi_{\RR}$.
    \end{definition}

    \begin{definition}
     The \emph{automorphism (rank) group} of an $[N,k]_\F$ code $\C$ endowed with the rank metric over $\F/\E$ is the stabilizer of $\C$ in $((\F^*\times \GL(N,\E))\rtimes \Aut(\F))$ under the action of $\Psi_{\RR}$, and it will be denoted by $\Aut_{\RR}(\C)$.
    \end{definition}
    
    \begin{remark}
    First, notice that if one restricts to  $\F$-linear codes, then the action of $\F^*$ is trivial. Hence, in this case we can only consider rank isometries of the form $(1,M,\theta)$. 
     Moreover, in this setting a code $\C$ is formally \emph{skew-cyclic with respect to $\theta$} if $(1,P_{\bar{\pi}},\theta)\in\Aut_{\RR}(\C)$, where $P_{\pi}$ is the permutation matrix acting as $\bar{\pi}$. 
     Depending now on the interest in properties of the code which are invariant under Hamming equivalence or rank equivalence, one may prefer to study codes which are rank equivalent or Hamming equivalent to a   skew-cyclic code with respect to $\theta$. For us, the natural metric inherited by a skew-cyclic code is the rank metric, and hence we will consider a code $\C$ to be (equivalent to) a skew-cyclic code if there exists an element of the form  $(1,M^{-1}P_{\pi}M,\theta)$ in $\Aut_{\RR}(\C)$.
    \end{remark}
    
    \subsection{Equivalence of sum-rank-metric codes}
    
    In this subsection we characterize the linear and semilinear isometries of a space endowed with the sum-rank metric. More precisely, let $\mathbf{n}:=(n_1,\ldots,n_\ell)$ be any partition of $n$, and let $\dd_{\SR}^{\mathbf{n}}$ be the sum-rank metric on the space $\F^n$ with respect to the partition $\mathbf{n}$ and the field extension $\F/\E$. We are going to characterize the $\F$-linear and the semilinear isometries of $(\F^{n},\dd_{\SR}^{\mathbf{n}})$.

{First, we introduce the following notation: for a given partition $\mathbf{n}=(n_1,\ldots,n_{\ell})$ of $n$, we associate the vector of positive integers} $\lambda(\mathbf{n})$, which counts the occurrences of the distinct entries of $\mathbf{n}$. Formally, let $\mathcal N(\mathbf{n}):=\{n_1,\ldots, n_{\ell}\}$, and let $t:=|\mathcal N(\mathbf{n})|$. Let $n_{i_1},\ldots, n_{i_t}$ be the distinct elements of $\mathcal N(\mathbf{n})$ and set 
    $\lambda(\mathbf{n}) \in \mathbb N^{t}$ to be the vector whose entries are
    $$ \lambda_j:=|\{k : n_k=n_{i_j}\}|, \quad \mbox{ for each } j=1,\ldots, t.$$
Furthermore, we denote by $l(\mathbf{n})=\ell$ the length of the partition $\mathbf{n}$.
    
    Now, we fix a vector of positive integers ${\mathbf{v}}=(v_1,\ldots,v_r) \in \mathbb{N}^r$, we denote by $\mathcal S_{\mathbf{v}}$ the direct product of the symmetric groups on $v_i$ elements, for each $i$, that is 
    $$ \mathcal S_{\mathbf{v}}=\mathcal S_{v_1}\times \ldots \times \mathcal S_{v_r}.$$
    Observe that $\mathcal S_{\mathbf{v}}$ is a subgroup of the symmetric group $\mathcal S_v$, where $v=v_1+\ldots+v_r$.
    Furthermore, we denote by $\GL(\mathbf{v},\E)$ the direct product of the general linear groups of degree $v_i$ over the field $\E$, that is
    $$ \GL(\mathbf{v},\E)=\GL(v_1,\E)\times \ldots \times \GL(v_r,\E).$$
    
    With this notation, we are now ready for the main result. We determine the group of $\F$-linear isometries of the space $(\F^n,\dd_{\SR}^{\bfn})$, which we denote by
    $\LIS_{\SR}(\mathbf{n},\F/\E).$
    Notice that the special case $\mathbf{n}=(N,\ldots,N)$ was already shown in \cite[Theorem 2]{martinez2020hamming} (see also \cite[Proposition 4.25]{neri2021twisted} for the $\E$-linear sum-rank isometries). Here we deal with the more general case, even though the strategy of the proof is essentially the same.

    \begin{theorem}\label{thm:linear_isometries}
     Let $\mathbf{n}$ be a partition of $n$. The group of $\F$-linear isometries of $(\F^{n},\dd_{\SR}^{\mathbf{n}})$ is
     $$ ((\F^*)^{l(\mathbf{n})} \times \GL(\mathbf{n},\E)) \rtimes \mathcal S_{\lambda(\mathbf{n})},$$
     which is acting as
     \begin{equation}\label{eq:isometry}(\mathbf{a},M_1,\ldots, M_{l(\mathbf{n})},\pi)\cdot (c^{(1)} \mid \ldots \mid c^{(l(\mathbf{n}))})\longmapsto (a_1c^{(\pi^{-1}(1))}M_1 \mid \ldots \mid a_{l(\mathbf{n})}c^{(\pi^{-1}(l(\mathbf{n})))}M_{l(\mathbf{n})}).\end{equation}
    \end{theorem}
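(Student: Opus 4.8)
The plan is to establish the two inclusions separately. For the \emph{easy} inclusion, I would verify directly that every map of the form \eqref{eq:isometry} preserves $\wt_{\SR}^{\mathbf{n}}$. Since $\pi\in\mathcal S_{\lambda(\mathbf{n})}$ only permutes blocks of equal length, the factor $M_i\in\GL(n_i,\E)$ acts by right multiplication on a block of the matching size $n_i=n_{\pi^{-1}(i)}$; because $M_i$ is invertible over $\E$ and $a_i\in\F^\ast$, the $\E$-span of the entries of the $i$-th output block $a_i c^{(\pi^{-1}(i))}M_i$ coincides with that of $c^{(\pi^{-1}(i))}$, so $\wt_{\RR}(a_i c^{(\pi^{-1}(i))}M_i)=\wt_{\RR}(c^{(\pi^{-1}(i))})$. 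Summing over $i$ and reindexing by $\pi$ gives $\wt_{\SR}^{\mathbf{n}}(\psi(c))=\wt_{\SR}^{\mathbf{n}}(c)$. That these maps form a group isomorphic to the stated semidirect product follows from the routine computation of how $\mathcal S_{\lambda(\mathbf{n})}$ conjugates the block-diagonal part $(\F^\ast)^{l(\mathbf{n})}\times\GL(\mathbf{n},\E)$.

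For the converse, let $\psi$ be an arbitrary $\F$-linear isometry; note that $\psi$ is automatically bijective, since $\psi(c)=0$ forces $\wt_{\SR}^{\mathbf{n}}(c)=0$, hence $c=0$. The crux is to recover the block decomposition purely from the metric. Write $V_i:=\{c\in\F^n : c^{(j)}=0 \text{ for all } j\neq i\}$ for the $i$-th block subspace and let $W$ be the set of nonzero vectors of sum-rank weight $1$. Every element of $W$ is supported on a single block $i$ with $c^{(i)}$ of rank weight $1$, so $W=\bigcup_i (W\cap V_i)$ is a disjoint union. I would then form the graph $G$ on vertex set $W$ with an edge $u\sim v$ whenever $\wt_{\SR}^{\mathbf{n}}(u-v)=1$. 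Two weight-$1$ vectors supported on distinct blocks always lie at distance $2$, hence are non-adjacent; conversely, any two rank-$1$ vectors $\alpha w$ and $\alpha' w'$ (with $\alpha,\alpha'\in\F^\ast$ and $w,w'\in\E^{n_i}\setminus\{0\}$) in a common block are joined by a path of length at most two, e.g.\ $\alpha w \sim \alpha w' \sim \alpha' w'$. Therefore the connected components of $G$ are exactly the sets $W\cap V_i$.

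Because $\psi$ preserves $\dd_{\SR}^{\mathbf{n}}$ it restricts to an automorphism of $G$, hence permutes the connected components. This produces a permutation $\pi$ of $\{1,\ldots,\ell\}$ with $\psi(W\cap V_{\pi^{-1}(i)})=W\cap V_i$. Since each $V_i$ is the $\F$-span of $W\cap V_i$, we obtain $\psi(V_{\pi^{-1}(i)})=V_i$, so $\psi$ restricts to an $\F$-linear isomorphism $V_{\pi^{-1}(i)}\to V_i$; comparing $\F$-dimensions forces $n_{\pi^{-1}(i)}=n_i$, which is precisely the condition $\pi\in\mathcal S_{\lambda(\mathbf{n})}$. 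Moreover this restriction is a rank isometry, as the sum-rank weight on $V_i$ is just the rank weight of the nonzero block. Finally I would apply the known description of $\F$-linear isometries in the pure rank-metric case — the subgroup $\F^\ast\times\GL(N,\E)$ recalled at the start of Section~\ref{sec:isometries} — to each restriction $V_{\pi^{-1}(i)}\to V_i\cong\F^{n_i}$, obtaining scalars $a_i\in\F^\ast$ and matrices $M_i\in\GL(n_i,\E)$ with $\psi(c)^{(i)}=a_i c^{(\pi^{-1}(i))}M_i$. Assembling these data recovers exactly \eqref{eq:isometry}.

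The main obstacle is the block-recovery step: the whole content of the theorem is that an abstract isometry cannot mix blocks, and this rests on showing that the block partition is intrinsically encoded in the metric, which I would carry out via the connected components of $G$ above. Once $\pi$ and the blocks are pinned down, the rest reduces cleanly to the already-established rank-metric case applied blockwise, and the bookkeeping to match the semidirect-product action is routine.
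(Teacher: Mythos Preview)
Your proof is correct and follows essentially the same approach as the paper: both verify that the proposed group acts by isometries, then show that an arbitrary isometry must permute the block subspaces by analyzing weight-$1$ vectors and their pairwise sum-rank distances, and finally invoke the known description of $\F$-linear rank isometries on each block. The only cosmetic difference is that the paper works directly with the standard basis vectors $e_i$ (using $\wt_{\SR}(e_i+e_t)=1$ for $i,t$ in the same block to force $f(e_i),f(e_t)$ into a common block), whereas you package the same observation as connectivity of your graph $G$.
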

    
    \begin{proof}
     %Let us denote by $\LIS_{\SR}(\mathbf{n},\LL/\K)$ the group of $\LL$-linear isometries of $(\LL^n,\dd_{\SR}^{\mathbf{n}})$. 
     It is immediate to observe that each element of the group $((\F^*)^{l(\mathbf{n})} \times \GL(\mathbf{n},\E)) \rtimes \mathcal S_{\lambda(\mathbf{n})}$ describes an $\F$-linear isometry of $(\F^n,\dd_{\SR}^{\mathbf{n}})$ via \eqref{eq:isometry}. Therefore, $\LIS_{\SR}(\mathbf{n},\F/\E) \supseteq ((\F^*)^{l(\mathbf{n})} \times \GL(\mathbf{n},\E)) \rtimes \mathcal S_{\lambda(\mathbf{n})}$. 
     
     On the other hand, let $f\in \LIS_{\SR}(\mathbf{n},\F/\E)$. For each $i \in \{1,\ldots, n\}$, denote by $e_i \in \F^n$ the $i$th standard basis vector. Moreover, we write $I_1,\ldots, I_{l(\mathbf{n})}$ to refer to the sets of indices of each of the  coordinates blocks, where $|I_j|=n_j$ for each $j$. Since $f$ is an isometry, we have $\wt_{\SR}(f(e_i))=\wt_{\SR}(e_i)=1$. Consider the vector $e_1$. By definition of sum-rank metric, this means that $v_1:=f(e_1)$ is zero on all but one coordinate blocks, i.e. there exists  $j$ such that $v_1=(v^{(1)} \mid \cdots \mid v^{(l(\mathbf{n}))})$ with $\wt_{\RR}(v^{(j)})=1$ and $v^{(i)}=0$ for each $i \neq j$. Select now any other index $t \in I_1$, and again, for the same reason, there exists $j_t$ such that for $f(e_t)=:v_t=(v^{(1)} \mid \cdots \mid v^{(l(\mathbf{n}))})$ it holds $\wt_{\RR}(v^{(j_t)})=1$ and $v^{(i)}=0$ for each $i \neq j_t$. Now observe that $1,t \in I_1$ and therefore we have $\wt_{\SR}(e_1+e_t)=\wt_{\RR}(e_1^{(1)}+e_t^{(1)})=1$. Hence, we must have that the nonzero coordinate blocks of $f(e_1)$ and $f(e_t)$ are the same, that is $j_t=j$. Since this holds for each $t \in I_1$ and we can do the same reasoning starting from each coordinate block $I_j$, this shows that $f$ induces a permutation $\pi_f$ between the coordinate blocks.  Therefore, $f$ also induces a map $f_j:\F^{n_j}\rightarrow \F^{n_{\pi_f(j)}}$ for each $j$. Since $f$ is bijective, also $f_j$ must be bijective, and therefore $|I_{\pi_f(j)}|=n_{\pi_f(j)}=n_j=|I_j|$. Thus, the permutation $\pi_f$ permutes the coordinate blocks, by sending each block in another with the same cardinality. This implies that $\pi_f \in \mathcal S_{\lambda(\mathbf{n})}$. Finally, by definition of sum-rank metric, each map $f_j$ has to be an $\F$-linear isometry of $(\F_{n_j},\dd_{\RR})$. These isometries are known to form the group $(\F^* \times \GL(n_j,\E))$, where $f_j$ acts as $:v\longmapsto a_jvM_j$, for some $a_j \in \F^*$, $M_j \in \GL(n_j,\E)$, as described by the linear part of the map $\Psi_{\RR}$ in \eqref{eq:PsiR}; see  \cite[Theorem 1]{berger2003isometries}, \cite[Proposition 1]{morrison2014equivalence}. This shows that the $\F$-linear isometry $f$ acts as the element $(\mathbf{a},M_1,\ldots, M_{l(\mathbf{n})},\pi_f)$, and therefore $\LIS_{\SR}(\mathbf{n},\F/\E)\subseteq ((\F^*)^{l(\mathbf{n})} \times \GL(\mathbf{n},\E)) \rtimes \mathcal S_{\lambda(\mathbf{n})}$, which conludes the proof.
    \end{proof}
    
    Theorem \ref{thm:linear_isometries} immediately leads to the following corollary, which determine the group of semilinear isometries of $(\F^n,\dd_{\SR}^{\mathbf{n}})$.
    
        \begin{corollary}
     Let $\mathbf{n}$ be a partition of $n$. The group of semilinear isometries of $(\F^{n},\dd_{\SR}^{\mathbf{n}})$ is
     $$ (((\F^*)^{l(\mathbf{n})} \times \GL(\mathbf{n},\E)) \rtimes \mathcal S_{\lambda(\mathbf{n})})\rtimes \Aut(\F),$$
     which is acting as
     \begin{equation}\small{\label{eq:sl_isometry}(\mathbf{a},M_1,\ldots, M_{l(\mathbf{n})},\pi,\theta)\cdot (c^{(1)} \mid \ldots \mid c^{(l(\mathbf{n}))})\longmapsto (\theta(a_1c^{(\pi^{-1}(1))})M_1 \mid \ldots \mid \theta(a_{l(\mathbf{n})}c^{(\pi^{-1}(l(\mathbf{n})))})M_{l(\mathbf{n})}).}\end{equation}
    \end{corollary}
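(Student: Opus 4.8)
The plan is to derive the corollary from Theorem~\ref{thm:linear_isometries} by the standard argument that the full group of semilinear isometries is the semidirect product of the $\F$-linear isometry group with the field automorphism group $\Aut(\F)$. First I would check that every element of the form described in \eqref{eq:sl_isometry} is indeed a semilinear isometry: applying a field automorphism $\theta$ coordinatewise preserves $\E$-linear dependence among entries (since $\theta$ fixes $\E$ setwise, as $\E$ is the common base field), hence preserves the rank weight of each block, and the remaining factor $(\mathbf{a},M_1,\ldots,M_{l(\mathbf{n})},\pi)$ is an $\F$-linear isometry by the theorem; composing the two gives a $\theta$-semilinear isometry. This shows the containment $\supseteq$.

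For the reverse containment, let $g$ be an arbitrary semilinear isometry of $(\F^n,\dd_{\SR}^{\mathbf{n}})$, semilinear with respect to some $\theta\in\Aut(\F)$. The key step is to build the componentwise map $\Theta$ that applies $\theta$ to every coordinate of $\F^n$; this $\Theta$ is itself a $\theta$-semilinear isometry for the sum-rank metric by the weight-preservation observation above. Then $f:=g\circ\Theta^{-1}$ is $\F$-linear, because precomposing a $\theta$-semilinear map with a $\theta^{-1}$-semilinear map cancels the twisting, and $f$ is still an isometry. By Theorem~\ref{thm:linear_isometries}, $f\in((\F^*)^{l(\mathbf{n})}\times\GL(\mathbf{n},\E))\rtimes\mathcal S_{\lambda(\mathbf{n})}$, so $g=f\circ\Theta$ lies in the asserted group, giving $\subseteq$.

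Finally I would record the semidirect product structure. The subgroup $\Aut(\F)$ normalizes the $\F$-linear isometry group: conjugating the action \eqref{eq:isometry} by $\Theta$ replaces each $a_j$ by $\theta(a_j)$ and each matrix $M_j$ by its entrywise image under $\theta$, which lies back in $\GL(\mathbf{n},\E)$ precisely because $\theta$ fixes $\E$, while the block permutation $\pi$ is unaffected. Hence the two subgroups intersect trivially and together generate the whole isometry group, so the structure is the semidirect product $(((\F^*)^{l(\mathbf{n})}\times\GL(\mathbf{n},\E))\rtimes\mathcal S_{\lambda(\mathbf{n})})\rtimes\Aut(\F)$, acting as in \eqref{eq:sl_isometry}.

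The only genuinely delicate point is the claim that applying $\theta$ entrywise preserves the rank weight of each block; this rests on $\theta$ fixing $\E$ pointwise (guaranteed by $\theta\in\Aut(\F)$ together with the convention that $\E$ is the prime-fixed base field, or more carefully on $\theta$ merely mapping $\E$ to itself so that $\E$-linear relations are preserved). I expect this to be the main obstacle to state cleanly, whereas the semidirect product bookkeeping and the reduction to the $\F$-linear case via $\Theta$ are routine.
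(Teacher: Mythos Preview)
Your argument is correct and is precisely the standard reduction the paper has in mind; the paper itself offers no proof beyond the phrase ``Theorem~\ref{thm:linear_isometries} immediately leads to the following corollary,'' and your decomposition $g=f\circ\Theta$ with $f$ $\F$-linear is the natural way to unpack that. The delicate point you flag---that $\theta\in\Aut(\F)$ must stabilize $\E$ for entrywise application to preserve rank weight and for $\theta(M_j)\in\GL(n_j,\E)$---is a genuine hypothesis, but it is one the paper already tacitly assumes when it records the rank-metric semilinear isometry group as $(\F^*\times\GL(N,\E))\rtimes\Aut(\F)$, so you are not introducing any new gap.
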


    It is well-known that already for the case $\ell=1$ and $n_1>1$, MacWilliams extension theorem does not hold (see e.g. \cite[Example 2.9]{barra2015macwilliams}). Thus, we give a global definition of equivalence of codes in the sum-rank metric, i.e. we only consider when they are related by an isometry of the whole ambient space. Define the map
    \begin{equation}\label{eq:PsiR} \Psi_{\SR}: ((((\F^*)^{l(\mathbf{n})} \times \GL(\mathbf{n},\E)) \rtimes \mathcal S_{\lambda(\mathbf{n})})\rtimes \Aut(\F)) \times \mathcal L(\F^n)  \longrightarrow  \mathcal L(\F^n).\end{equation}
   as the extension of the map acting as in \eqref{eq:sl_isometry} to the set $\mathcal L(\F^n)=\{\C \subseteq \F^n : \C \mbox{ is } \F\mbox{-linear}\}$ of $\F$-subspaces.

    \begin{definition}
         Two $[n,k]_\F$ codes are called \emph{semilinearly (sum-rank) equivalent} if they belong to the same orbit of the action $\Psi_{\SR}$.
    \end{definition}

    \begin{definition}
     The \emph{automorphism (sum-rank) group} of an $[n,k]_\F$ code $\C$ endowed with the sum-rank metric $\dd_{\SR}^{\mathbf{n}}$ over $\F/\E$ is the stabilizer of $\C$ in $(((\F^*)^{l(\mathbf{n})} \times \GL(\mathbf{n},\E)) \rtimes \mathcal S_{\lambda(\mathbf{n})})\rtimes \Aut(\F) $ under the action of $\Psi_{\SR}$, and it will be denoted by $\Aut_{\SR}(\C)$.
    \end{definition}
    
\begin{remark}
 Fix the partition $\mathbf{n}=(N,\ldots,N)$. Then a code is cyclic-skew-cyclic with respect to the automorphism $\theta$ if and only if $(\mathbf{1},(I_N,\ldots,I_N),\bar{\tau},\mathrm{id})$ and $(\mathbf{1},(P_{\bar{\pi}},\ldots,P_{\bar{\pi}}),\mathrm{id},\theta)$ belong both to $\Aut_{\SR}(\C)$, where $\bar{\tau}=(1\ldots\ell)$ and $\bar{\pi}=(1\ldots n)$.
\end{remark}

\section{Roos and Hartmann-Tzeng bounds for cyclic-skew-cyclic codes}\label{sec:Roos_HT_bounds}

In this section we provide a lower bound on the minimum sum-rank distance of cyclic-skew-cyclic code which generalizes the Roos bound for cyclic codes in the Hamming metric \cite{roos1982generalization, roos1983new} and its skew-cyclic version for the rank metric \cite{alfarano2021roos}. Moreover, we also derive a sum-rank metric version of the Hartmann-Tzeng bound, generalizing the one known for the Hamming metric \cite{hartmann1972generalizations} and the more recent one for the rank metric \cite{Gomez/Lobillo/Navarro/Neri:2018}. 
From now on we fix the partition of $n$ to be $\mathbf{n}=(\underbrace{m,\dots, m}_{\ell \textnormal{ times}})$, hence, we require $N=m$, as for the sum-rank BCH bound.

   Let $a\in \K$ be an $\ell$-th root of unity and $\beta\in\LL^\ast$ be a normal element of $\LL/\K$ and let $\mathcal{A}:=\{1,a,\dots,a^{\ell-1}\}$. Since $\beta a^{bi}$ is still normal in $\LL$ for every $b\geq 0$ and every $0\leq i \leq \ell-1$, we can define new bases of $\LL/\mathbb{K}$ as  $$\Tilde{\mathcal{B}}_i:=\{\beta a^{bi}, \sigma(\beta)a^{bi}, \dots, \sigma^{m-1}(\beta)a^{bi}\}.$$ 
   Let $\Tilde{\mathcal{B}}=(\Tilde{\mathcal{B}_0}, \dots, \Tilde{\mathcal{B}}_{\ell-1})$ and for $k\in\{1,\dots, n\}$, define the matrix $$D(\mathcal{A},\Tilde{\mathcal{B}}) = (D_0|\dots| D_{\ell-1})\in\LL^{k\times n},$$
   where $n=\ell m$ and 
   \begin{equation}\label{eq:DiRS} 
   D_i = \begin{pmatrix}
   \beta a^{bi} & \sigma(\beta) a^{bi} & \cdots & \sigma^{m-1}(\beta)a^{bi} \\
   \sigma(\beta) a^{(b+1)i} & \sigma^2(\beta) a^{(b+1)i} &\cdots & \beta a^{(b+1)i} \\
   \vdots & \vdots & \ddots & \vdots \\
   \sigma^{k-1}(\beta) a^{(b+k-1)i} &  \sigma^{k}(\beta) a^{(b+k-1)i} & \cdots &  \sigma^{k-2}(\beta) a^{(b+k-1)i}
   \end{pmatrix}.\end{equation}
   Then, $D(\mathcal{A},\Tilde{\mathcal{B}})$ is the generator matrix of a linearized Reed-Solomon code $\C_{k}^\sigma(\mathcal{A},\Tilde{\mathcal{B}})$; see \cite{martinez2018skew}.

   Let $\mathcal{E}= \{\beta_1,\dots,\beta_m\}$ be a basis of $\LL/\K$, assume that $\ell$ is coprime with the characteristic of $\K$ and with $m$ and let $a\in\K$ be an $\ell$-th root of unity. Fix $b\geq 0$. Define $\mB_0,\dots, \mB_{\ell-1}$ bases of the extension $\LL/\K$ as 
   $$ \mB_i := \mathcal{E}\cdot a^{bi} =  \{\beta_1a^{bi}, \dots, \beta_ma^{bi} \}.$$
   \begin{lemma}\label{lem:rankDi-basecase}
   Let $t,r$ be positive integers and $k_0,\dots,k_r\in\{0,\dots, n-1\}$ be such that $k_r - k_0 \leq t + r - 1$ and $k_{j-1} < k_j$ for $1 \leq j \leq r$. For any $0\leq i\leq \ell-1$ let $\alpha_1^{(i)},\dots,\alpha_{j_i}^{(i)}\in\mB_i$ such that $j_0+ \dots + j_{\ell-1} = t+r$. Then the matrix 
   $$ A_0 = (\tilde{D}_0|\cdots | \tilde{D}_{\ell-1}), $$
   
   where 
  \begin{equation}\label{eq:tildeDi}
      \tilde{D}_i = \begin{pmatrix}
      \sigma^{k_0}(\alpha_1^{(i)})a^{k_0i} & \cdots & \sigma^{k_0}(\alpha_{j_i}^{(i)})a^{k_0i} \\
      \sigma^{k_1}(\alpha_1^{(i)})a^{k_1i} & \cdots &  \sigma^{k_1}(\alpha_{j_i}^{(i)})a^{k_1i}\\
      \vdots  &  & \vdots \\
      \sigma^{k_r}(\alpha_1^{(i)})a^{k_ri} & \cdots &   \sigma^{k_r}(\alpha_{j_i}^{(i)})a^{k_ri}
  \end{pmatrix}
  \end{equation}
   has rank $r+1$.
   \end{lemma}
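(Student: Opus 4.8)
The plan is to prove that the $(r+1)\times(t+r)$ matrix $A_0$ has full row rank $r+1$; since $t\ge 1$ we have $r+1\le t+r$, so this is the maximal possible rank. I would argue by contradiction, assuming that the rows of $A_0$ are $\LL$-linearly dependent. Such a dependence provides $\lambda_0,\dots,\lambda_r\in\LL$, not all zero, such that for every block index $i$ and every column $s$ in that block one has
\[
\sum_{p=0}^r \lambda_p\, \sigma^{k_p}(\alpha_s^{(i)})\, a^{k_p i}=0.
\]
Writing each $\alpha_s^{(i)}=\beta_c\,a^{bi}$ with $\beta_c\in\mathcal E$ a distinct basis element, and using that $a\in\K$ is fixed by $\sigma$, these relations reduce, after cancelling the nonzero factor $a^{bi}$, to
\[
\sum_{p=0}^r \lambda_p\, a^{k_p i}\,\sigma^{k_p}(\beta_c)=0
\]
for each of the $j_i$ basis elements $\beta_c$ occurring in block $i$.

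The key step is to read these equations through skew-polynomial evaluation. Introduce $f(z)=\sum_{p=0}^r \lambda_p z^{k_p}\in\LL[z;\sigma]$, which is nonzero because the $\lambda_p$ are not all zero. For $a^i\in\K$ the relevant norm $N_{k_p}(a^i)$ equals $a^{k_p i}$, so the left-hand side above is precisely the (generalized) operator evaluation of $f$ at $a^i$ applied to $\beta_c$; see \cite{martinez2018skew}. Hence, within block $i$, the $j_i$ distinct (thus $\K$-linearly independent) elements $\beta_c\in\mathcal E$ lie in the kernel of the $\K$-linear operator attached to $f$ at $a^i$. To control the degree I factor $f(z)=g(z)\,z^{k_0}$ with $g(z)=\sum_{p=0}^r \lambda_p z^{k_p-k_0}$; since operator evaluation is multiplicative and $z^{k_0}$ evaluates to the invertible operator $\gamma\mapsto a^{k_0 i}\sigma^{k_0}(\gamma)$, the two kernels have the same $\K$-dimension in each block. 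Thus the operator attached to the nonzero polynomial $g$ at $a^i$ has a kernel of $\K$-dimension at least $j_i$, while $\deg g\le k_r-k_0\le t+r-1$.

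Finally I would invoke the bound on the number of roots of a skew polynomial counted across distinct conjugacy classes (Lam--Leroy; see also \cite{martinez2018skew}): for pairwise non-conjugate evaluation points the total $\K$-dimension of the operator kernels of a nonzero $g$ is at most $\deg g$. The hypotheses that $\ell$ is coprime with both $m$ and the characteristic of $\K$ guarantee that $\mathcal A=\{1,a,\dots,a^{\ell-1}\}$ consists of $\ell$ elements lying in pairwise distinct conjugacy classes of $\LL[z;\sigma]$, since their norms $a^{km}$ (for $0\le k\le \ell-1$) are pairwise distinct. Summing over the blocks then yields
\[
t+r=\sum_{i=0}^{\ell-1}j_i\ \le\ \deg g\ \le\ t+r-1,
\]
a contradiction. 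Therefore the rows of $A_0$ are $\LL$-linearly independent and $\rk A_0=r+1$. The main obstacle is exactly this last step: one must set up the operator-evaluation formalism correctly and verify that the chosen $\ell$ roots of unity fall in distinct conjugacy classes, so that a single skew polynomial $g$ of controlled degree $k_r-k_0$ governs all $\ell$ blocks simultaneously; this is where the span condition $k_r-k_0\le t+r-1$ and the coprimality assumptions are essential.
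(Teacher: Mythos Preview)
Your argument is correct, but it takes a different route from the paper. The paper's proof embeds $A_0$ as an $(r+1)\times(t+r)$ submatrix of the generator matrix of a suitable linearized Reed--Solomon code $\C_{r+t}^\sigma(\mathcal A,\bar{\mB})$ (after shifting the bases by $\sigma^{k_0}$ and $a^{k_0 i}$ to align the first row), and then simply invokes the MDS property of that code from \cite{martinez2018skew}: any $(t+r)\times(t+r)$ submatrix of the generator is invertible, so deleting $t-1$ of its rows leaves a matrix of rank $r+1$. Your proof instead unwinds what lies beneath that MDS property: you convert a putative row dependence into a nonzero skew polynomial $g$ of degree at most $k_r-k_0\le t+r-1$ whose operator evaluations at the $\ell$ points $1,a,\dots,a^{\ell-1}$ have kernels of total $\K$-dimension $j_0+\cdots+j_{\ell-1}=t+r$, and then apply the Lam--Leroy root bound across distinct conjugacy classes to reach a contradiction. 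The paper's approach is shorter and treats the linearized Reed--Solomon result as a black box; yours is more self-contained and makes transparent exactly where the span condition $k_r-k_0\le t+r-1$ (the degree of $g$) and the coprimality of $\ell$ with $m$ (pairwise non-conjugacy of the $a^i$) enter. Ultimately the two arguments rest on the same underlying fact, since the MDS property of linearized Reed--Solomon codes is itself proved via precisely this root-counting bound.
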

  \begin{proof}
  Define $\bar{\mB}:=(\sigma^{k_0}(\mathcal{B}_0)\cdot a^{k_0i},\ldots, \sigma^{k_0}(\mathcal{B}_{\ell-1})\cdot a^{k_0i})$ where $\sigma^{k_0}$ is applied to each element of the bases $\mB_i$'s. Let $\mathcal{A} = \{1,a,\dots, a^{\ell-1}\}$.
  Consider the linearized Reed-Solomon $\C_{r+t}^\sigma(\mathcal{A},\bar{\mB})$ with generator matrix $D(\mathcal{A},\bar{\mB}) = (D_0|\dots|D_{\ell-1})$, where $D_i$ is defined as in Equation \eqref{eq:DiRS}. Then, the matrix $A_0$ is obtained from $D(\mathcal{A},\bar{\mB})$ after deleting $(t-1)$ rows of a full-size submatrix $(t+r)\times (t+r)$. Since $\C_{r+t}^\sigma(\mathcal{A},\bar{\mB})$ is an MDS code, it follows that the rank of the submatrix $A_0$ is full, which concludes the proof.
  \end{proof}
  
  Before proceeding with the next auxiliary result, we introduce a useful map as follows. For a  root of unity $b\in \K$, for $\varsigma \in \Gal(\LL/\K)$  and  integers $t_1,t_2$, define 
  \begin{align*}
      \psi_{b,\varsigma}^{(t_1,t_2)} : \LL^{\ell m}  &\longrightarrow  \LL^{\ell m} \\
      (c^{(0)} \mid c^{(1)} \mid \ldots \mid c^{(\ell-1)}) &\longmapsto (\varsigma^{t_1}(c^{(0)}) \mid \varsigma^{t_1}(c^{(1)})b^{t_2} \mid \ldots \mid \varsigma^{t_1}(c^{(\ell-1)})b^{(\ell-1)t_2}).
  \end{align*}
% $$ 
% \psi_{b,\theta}^{(t_1,t_2)}(c^{(0)} \mid c^{(1)} \mid \ldots \mid c^{(\ell-1)}):= (\theta^{t_1}(c^{(0)}) \mid \theta^{t_1}(c^{(1)})b^{t_2} \mid \ldots \mid \theta^{t_1}(c^{(\ell-1)})b^{(\ell-1)t_2}).
% $$
Since $b$ is fixed by $\varsigma$, one can immediately see that the map $\psi_{b,\varsigma}^{(t_1,t_2)}$ is $\LL$-linear and that $\ker(\psi_{b,\varsigma}^{(t_1,t_2)})=\{0\}$. Moreover, it can be readily observed that \begin{equation} \label{eq:composition}
    \psi_{b,\varsigma}^{(t_1,t_2)} \circ \psi_{b,\varsigma}^{(u_1,u_2)}=\psi_{b,\varsigma}^{(t_1+u_1,t_2+u_2)},
\end{equation}
where $t_1+u_1$ is taken modulo $\ord_{\K^*}(b)$, and $t_2+u_2$ is taken modulo $\ord_{\Aut(\LL)}(\varsigma)$.

   \begin{lemma}\label{lem:rankDi}
   With the same assumptions of Lemma \ref{lem:rankDi-basecase}, let $s$ be a positive integer coprime with $\ell$ and $m$. Let $A_0 = (\tilde{D}_0|\dots|\tilde{D}_{\ell-1})$, where $\tilde{D}_i$'s are defined in \eqref{eq:tildeDi}, and let
   $$A_i:= \left(\begin{array}{c|c|c|c}
   \tilde{D}_0 & \tilde{D}_1 & \cdots &\tilde{D}_{\ell-1} \\
   \sigma^s(\tilde{D}_0)& \sigma^s(\tilde{D}_1)a^s & \cdots & \sigma^s(\tilde{D}_{\ell-1})a^{(\ell-1)s}\\
   \vdots & \vdots& & \vdots  \\
   \sigma^{is}(\tilde{D}_0) &\sigma^{is}(\tilde{D}_1)a^{is} & \cdots & \sigma^{is}(\tilde{D}_{\ell-1})a^{i(\ell-1)s}
   \end{array}\right)$$
   for $i\leq t-1$. Then $\rk(A_i)\geq r+i+1$. In particular, $\rk(A_{t-1}) = r+t$.
   \end{lemma}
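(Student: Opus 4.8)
The plan is to work in the ambient space $\LL^{t+r}$ and to track the $\LL$-span of the rows of $A_i$, exploiting that the successive row-blocks of $A_i$ are the iterates of the map $\psi:=\psi_{a,\sigma}^{(s,s)}$ acting blockwise on the $t+r$ selected coordinates (that is, applying $\sigma^s$ entrywise and scaling the $\mu$-th block by $a^{\mu s}$). Writing $W:=\rowsp(A_0)\subseteq\LL^{t+r}$, the multiplication rule \eqref{eq:composition} shows that the $j$-th row-block of $A_i$ is obtained by applying $\psi^j$ row-wise to $A_0$; since $\psi$ is a semilinear bijection, the row space of that block is $\psi^j(W)$, and hence
$$\rk(A_i)=\dim_\LL\sum_{j=0}^{i}\psi^j(W),\qquad \dim_\LL W=r+1,$$
the value $r+1$ being exactly Lemma \ref{lem:rankDi-basecase}. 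It therefore suffices to prove that $\dim_\LL\sum_{j=0}^{i}\psi^j(W)$ grows strictly with $i$ until it equals the ambient dimension $t+r$: starting from $r+1$, this yields $\rk(A_i)\ge r+i+1$ for $i\le t-1$, and $\rk(A_{t-1})=r+t$ since the rank cannot exceed the number $t+r$ of columns.

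The heart of the argument is the description of the $\psi$-invariant subspaces of $\LL^{t+r}=\bigoplus_{\mu=0}^{\ell-1}V_\mu$, where $V_\mu\cong\LL^{j_\mu}$ is the $\mu$-th coordinate block. Since $\psi$ does not permute blocks and acts on $V_\mu$ as $v\mapsto a^{\mu s}\sigma^s(v)$, the power $\Phi:=\psi^m$ is the $\LL$-linear diagonal map scaling $V_\mu$ by $a^{m s\mu}$. Because $a$ is a primitive $\ell$-th root of unity (implicit in the linearized Reed--Solomon construction underlying Lemma \ref{lem:rankDi-basecase}) and $\gcd(ms,\ell)=1$ (here one uses $\gcd(s,\ell)=1$ together with the standing hypothesis $\gcd(m,\ell)=1$), the scalars $a^{ms\mu}$ for $0\le\mu\le\ell-1$ are pairwise distinct. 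Hence the eigenspaces of $\Phi$ are precisely the blocks $V_\mu$, and every $\psi$-invariant subspace $U$, being $\Phi$-invariant, decomposes as $U=\bigoplus_\mu(U\cap V_\mu)$ with each $U\cap V_\mu$ invariant under $\psi|_{V_\mu}$.

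I would then apply this to $U^\star:=\sum_{j\ge0}\psi^j(W)$, the smallest $\psi$-invariant subspace containing $W$. It splits as $\bigoplus_\mu R_\mu$; moreover $R_\mu$ contains the block projection $Q_\mu:=\rowsp(\tilde D_\mu)$ of $W$ and is $\psi|_{V_\mu}$-invariant, so $R_\mu\supseteq\sum_{j\ge0}(\psi|_{V_\mu})^j(Q_\mu)$. By \eqref{eq:tildeDi}, the map $(\psi|_{V_\mu})^j$ sends the $\rho$-th row of $\tilde D_\mu$ to a nonzero scalar multiple of $(\sigma^{k_\rho+sj}(\alpha_1^{(\mu)}),\dots,\sigma^{k_\rho+sj}(\alpha_{j_\mu}^{(\mu)}))$. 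As $\gcd(s,m)=1$, the exponents $k_\rho+sj$ sweep out every residue modulo $m$, so $\sum_{j\ge0}(\psi|_{V_\mu})^j(Q_\mu)$ equals the row space of the full Moore matrix $\big(\sigma^{e}(\alpha_c^{(\mu)})\big)_{0\le e\le m-1,\,1\le c\le j_\mu}$. Since $\alpha_1^{(\mu)},\dots,\alpha_{j_\mu}^{(\mu)}$ are distinct elements of the $\K$-basis $\mB_\mu$ of $\LL$, they are $\K$-linearly independent, so this Moore matrix has full column rank $j_\mu$. Hence $R_\mu=V_\mu$ for every $\mu$, whence $U^\star=\LL^{t+r}$ and $\dim_\LL U^\star=t+r$.

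Finally, I would close the induction as follows. If at some step $i$ one had $\psi^i(W)\subseteq\sum_{j=0}^{i-1}\psi^j(W)=:U$, then $U$ would be $\psi$-invariant and contain $W$, hence $U\supseteq U^\star$ and $\dim_\LL U=t+r$. Contrapositively, as long as $\dim_\LL\sum_{j=0}^{i-1}\psi^j(W)<t+r$, the dimension increases by at least one at step $i$; combined with $\dim_\LL W=r+1$ this gives exactly the stated bounds. The main obstacle is the structural middle step: it is precisely the two coprimality hypotheses that make everything fit, with $\gcd(s,\ell)=1$ (and $\gcd(m,\ell)=1$) forcing the block-diagonalization of the $\psi$-invariant subspaces, and $\gcd(s,m)=1$ guaranteeing that iterating $\psi$ within a block realizes all $\sigma$-shifts and thereby fills the block by the Moore/linear-independence argument. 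The remaining verifications, most notably the identity $\rk(A_i)=\dim_\LL\sum_{j\le i}\psi^j(W)$, are routine manipulations with \eqref{eq:composition}.
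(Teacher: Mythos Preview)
Your argument is correct. The overall scaffolding matches the paper's: both set $W=\rowsp(A_0)$, note that $\rk(A_i)=\dim_\LL\sum_{j\le i}\psi^j(W)$ with $\psi=\psi_{a,\sigma}^{(s,s)}$, use Lemma~\ref{lem:rankDi-basecase} for the base value $r+1$, and then show that once the chain stagnates the resulting subspace is $\psi$-invariant and must be all of $\LL^{t+r}$.

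The genuine difference is in how that last implication is proved. The paper does \emph{not} analyze the $\psi$-invariant subspaces structurally. Instead it observes that since $\gcd(s,\ell m)=1$ one can pick $v$ with $sv\equiv 1\pmod{\ell m}$, so $\psi^{(s,s)}$-invariance iterates to $\psi^{(1,1)}$-invariance; the stalled subspace then contains all consecutive shifts of $A_0$, and a second application of Lemma~\ref{lem:rankDi-basecase} (now with consecutive indices $k_0,k_0+1,\dots$) forces its dimension to exceed $r+i$, a contradiction. Your route is different: you take $\Phi=\psi^m$, use $\gcd(ms,\ell)=1$ (from $\gcd(s,\ell)=1$ together with the standing assumption $\gcd(m,\ell)=1$) to see that $\Phi$ has the $\ell$ distinct block-eigenvalues $a^{ms\mu}$, hence every $\psi$-invariant subspace splits blockwise; then within each block you use $\gcd(s,m)=1$ and the $\K$-linear independence of $\alpha_1^{(\mu)},\dots,\alpha_{j_\mu}^{(\mu)}$ to fill it via the Moore matrix. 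This avoids re-invoking Lemma~\ref{lem:rankDi-basecase} and cleanly isolates the role of each coprimality hypothesis, at the cost of a slightly longer structural detour; the paper's argument is shorter but leans on Lemma~\ref{lem:rankDi-basecase} (and hence on the MSRD property of linearized Reed--Solomon codes) a second time.
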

\begin{proof}
First observe that the claim holds for  $i=0$, since  by Lemma \ref{lem:rankDi-basecase}, $\rk(A_0) = r+1$. 

\noindent 
Define $\mU_i:=\rowsp(A_i)$, and observe that they form a chain $\mU_0\subseteq \mU_1\subseteq \ldots \subseteq \mU_{t-1}$. Suppose that there exists an $i\geq 1$ such that $\dim(\mU_{i-1})\geq r+i$, but $\dim(\mU_{i})<r+i+1$. This implies that $\dim(\mU_{i-1})=\dim(\mU_{i})=r+i$ and hence $\mU_{i-1}=\mU_i$. However, due to the structure of the matrices $A_j$'s, we have $\mU_i=\mU_{i-1}+\psi_{a,\sigma}^{(s,s)}(\mU_{i-1})$. Therefore, $\psi_{a,\sigma}^{(s,s)}(\mU_{i-1})=\mU_{i-1}$. Let $v$ be the inverse of $s$ modulo $n=\ell m$. Iterating $v$ times the map $\psi_{a,\sigma}^{(s,s)}$ and using \eqref{eq:composition}, we obtain
$$ \mU_i=\mU_{i-1}=(\psi_{a,\sigma}^{(s,s)})^{v}(\mU_{i-1})=\psi_{a,\sigma}^{(1,1)}(\mU_{i-1}).$$
From this, we also obtain that $\psi_{a,\sigma}^{(j,j)}(\mU_{i-1})=\mU_{i-1}$, for every $j\in \{1,\ldots, i+r-1\}$. Hence, $\mU_{i}$ contains the row space of the matrix

$$ \left(\begin{array}{c|c|c|c}
   \tilde{D}_0 & \tilde{D}_1 & \cdots &\tilde{D}_{\ell-1} \\
   \sigma(\tilde{D}_0)& \sigma(\tilde{D}_1)a & \cdots & \sigma(\tilde{D}_{\ell-1})a^{(\ell-1)}\\
   \vdots & \vdots& & \vdots  \\
   \sigma^{i+r-1}(\tilde{D}_0) &\sigma^{i+r-1}(\tilde{D}_1)a^{i+r-1} & \cdots & \sigma^{i+r-1}(\tilde{D}_{\ell-1})a^{(i+r-1)(\ell-1)}
   \end{array}\right). $$
   From the above matrix, one can select the first row from each block and obtain the submatrix
   $$  X = (\tilde{E}_0|\cdots | \tilde{E}_{\ell-1}), $$
   where 
  \begin{equation*}
      \tilde{E}_i = \begin{pmatrix}
      \sigma^{k_0}(\alpha_1^{(i)})a^{k_0i} & \cdots & \sigma^{k_0}(\alpha_{j_i}^{(i)})a^{k_0i} \\
      \sigma^{k_0+1}(\alpha_1^{(i)})a^{(k_0+1)i} & \cdots &  \sigma^{k_0+1}(\alpha_{j_i}^{(i)})a^{(k_0+1)i}\\
      \vdots  &  & \vdots \\
      \sigma^{k_0+i+r-1}(\alpha_1^{(i)})a^{(k_0+i+r-1)i} & \cdots &   \sigma^{k_0+i+r-1}(\alpha_{j_i}^{(i)})a^{(k_0+i+r-1)i}
  \end{pmatrix}.
  \end{equation*}
 By Lemma \ref{lem:rankDi-basecase}, $X$ has rank $r+i$ and we obtain a contradiction, which concludes the proof.

\end{proof}

We recall the following result which will be used for proving the Roos bound for the sum-rank metric. For this purpose, let $\C$ be an $[n,k]_K$ code and $A \in \GL(n,K)$, then define $\C\cdot A:=\{cA \mid c \in \C\}$.

    \begin{theorem}[\text{\cite[Theorem 3]{martinez2019theory}}]\label{thm:SRminHamm}
     Let $\C$ be an $[n,k]$ sum-rank metric code w.r.t. the extension $\F/\E$ and the partition $(n_1,\ldots, n_\ell)$ of $n$. Then
     $$\dd_{\SR}(\C)=\min\{ \dd_{\HH}(\C\cdot A) \mid A=\mathrm{diag}(A_1,\ldots,A_\ell), A_i \in \GL(n_i,\E)\}.$$

     \end{theorem}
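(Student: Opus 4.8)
The plan is to reduce the whole statement to two elementary facts comparing the rank weight and the Hamming weight on a single block, and then to assemble the equality from the two opposite inequalities. For a single block $v\in\F^{n_i}$ I would first isolate: (i) the rank weight is invariant under the right action of $\GL(n_i,\E)$, i.e. $\wt_{\RR}(vA_i)=\wt_{\RR}(v)$ for every $A_i\in\GL(n_i,\E)$, because the coordinates of $vA_i$ are $\E$-linear combinations of those of $v$ and, $A_i$ being invertible, the two families span the same $\E$-subspace of $\F$; and (ii) $\wt_{\RR}(v)\le\wt_{\HH}(v)$, since the number of nonzero coordinates is at least the $\E$-dimension of their span, and this bound is attained after a suitable $\E$-column reduction. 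For the equality case of (ii), writing $r=\wt_{\RR}(v)$ and selecting $r$ coordinates of $v$ that form an $\E$-basis of $\langle v_1,\dots,v_{n_i}\rangle_\E$, elementary $\E$-column operations (hence right multiplication by some $A_i\in\GL(n_i,\E)$) clear every remaining coordinate, producing a vector of Hamming weight exactly $r$.

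For the inequality $\dd_{\SR}(\C)\le\min_A\dd_{\HH}(\C\cdot A)$, I would fix any block-diagonal $A=\diag(A_1,\dots,A_\ell)$ with $A_i\in\GL(n_i,\E)$ and any nonzero $c=(c^{(1)}\mid\dots\mid c^{(\ell)})\in\C$. Applying (i) and (ii) blockwise gives
\[ \wt_{\SR}(c)=\sum_{i=1}^\ell\wt_{\RR}(c^{(i)})=\sum_{i=1}^\ell\wt_{\RR}(c^{(i)}A_i)\le\sum_{i=1}^\ell\wt_{\HH}(c^{(i)}A_i)=\wt_{\HH}(cA). \]
Minimizing over the nonzero codewords yields $\dd_{\SR}(\C)\le\dd_{\HH}(\C\cdot A)$, and since this holds for every admissible $A$ it holds for the minimum over $A$.

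For the reverse inequality I would exhibit one good transformation. I pick a nonzero codeword $c^*\in\C$ with $\wt_{\SR}(c^*)=\dd_{\SR}(\C)$ and apply the equality case of (ii) to each block $(c^*)^{(i)}$, obtaining $A_i\in\GL(n_i,\E)$ with $\wt_{\HH}((c^*)^{(i)}A_i)=\wt_{\RR}((c^*)^{(i)})$. Setting $A=\diag(A_1,\dots,A_\ell)$ one gets $\wt_{\HH}(c^*A)=\sum_{i}\wt_{\RR}((c^*)^{(i)})=\dd_{\SR}(\C)$. Because $A$ is invertible, $c^*A$ is a nonzero element of $\C\cdot A$, so $\dd_{\HH}(\C\cdot A)\le\wt_{\HH}(c^*A)=\dd_{\SR}(\C)$, and therefore $\min_A\dd_{\HH}(\C\cdot A)\le\dd_{\SR}(\C)$. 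Combining the two inequalities gives the claimed equality.

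The only step that looks delicate is this second inequality, where one might worry that no single $A$ can realize the sum-rank weight as a Hamming weight for all codewords simultaneously. The observation that dissolves this concern is that $\dd_{\HH}(\C\cdot A)$ is itself a minimum over codewords, so it is enough to make $A$ optimal for a single minimal-weight codeword $c^*$; the other codewords may well have strictly smaller Hamming weight under $A$, but by the already-proved first inequality their sum-rank weight controls this from below, so nothing is lost. Hence the genuine content of the proof lies entirely in the blockwise reduction (ii), namely the classical fact that the rank weight of a vector equals the minimal Hamming weight in its $\E$-column-equivalence class.
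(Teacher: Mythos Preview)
Your proof is correct. Note, however, that the paper does not actually prove this theorem: it is stated as a cited result from \cite{martinez2019theory} (Theorem~3 there) and used as a black box in the proof of the sum-rank Roos bound. So there is no ``paper's own proof'' to compare against. Your argument---reducing to the blockwise facts that $\wt_{\RR}$ is $\GL(n_i,\E)$-invariant, that $\wt_{\RR}\le\wt_{\HH}$ always, and that equality is achieved for any fixed vector by an $\E$-column reduction---is the standard one and is essentially how the result is established in the cited reference.
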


We are now ready to prove the main result of the section which partially answers to Open Problem 2 in \cite{martinez2020sum}.

\begin{theorem}(Sum-rank Roos bound)\label{thm:Roos_SR}
  Let $n=m\ell$ and $\C\subseteq\F^n$ be a cyclic-skew-cyclic code. Let $b, s,\delta, k_0,\dots, k_r$ be integers, such that $\gcd(n,s)=1$, $k_i<k_{i+1}$ for $i=0,\dots,r-1$, $k_r-k_0 \leq \delta + r -2$. Let $a\in\K$ be a primitive $\ell$-th root of unity and $\beta$ be a normal element of $\LL/\K$. 
 If 
$$
\{(a^{b+si+k_j}, \sigma^{si+k_j}(\beta))\in \K \times \LL^\ast \, : \,  0\leq i\leq \delta-2, 0\leq j\leq r\}\subseteq T_\C,
$$
 then $\dd_{\SR}(\C)\geq \delta + r$.
\end{theorem}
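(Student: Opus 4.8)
The plan is to reduce the sum-rank statement to a family of Hamming-metric statements via Theorem \ref{thm:SRminHamm}, and then to turn the defining-set hypothesis into a system of parity checks whose coefficient matrix is exactly the matrix $A_{t-1}$ of Lemma \ref{lem:rankDi} with $t=\delta-1$. Concretely, since $\dd_{\SR}(\C)$ equals the minimum of $\dd_{\HH}(\C\cdot A)$ over all block-diagonal $A=\diag(A_1,\ldots,A_\ell)$ with $A_{v}\in\GL(m,\E)$, it suffices to prove $\dd_{\HH}(\C\cdot A)\ge\delta+r$ for every such $A$. I therefore fix one $A$ and argue by contradiction, assuming there is a nonzero $\tilde c=cA\in\C\cdot A$ with $\wt_{\HH}(\tilde c)=w\le\delta+r-1$.

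First I would translate each membership $(a^{b+si+k_j},\sigma^{si+k_j}(\beta))\in T_\C$ into a linear condition on $c$. Writing $\nu(c)=\sum_{u=0}^{m-1}g_u(x)z^u$ as in \eqref{eq:nu} with $g_u(x)=\sum_{v}c^{(v)}_u x^v$, and using that $\nu(\C)$ is a left ideal generated by $g$ together with the product rule for right evaluation, every codeword vanishes at each point of the defining set. Unwinding $\Ev_{a',\beta'}=\Ev^\sigma_{\beta'}\circ\Ev_{a',z}$ through the $\sigma$-polynomial description of $\Ev^\sigma_{\beta'}$, the condition indexed by $(i,j)$ becomes the syndrome equation
$$ S_{i,j}:=\sum_{v=0}^{\ell-1}\sum_{u=0}^{m-1}c^{(v)}_u\,a^{(b+si+k_j)v}\,\sigma^{u+si+k_j}(\beta)=0,\qquad 0\le i\le\delta-2,\ 0\le j\le r. $$
Collecting these into a matrix $M\in\LL^{(\delta-1)(r+1)\times n}$ with rows indexed by $(i,j)$ and columns by the coordinate $(v,u)$, the conditions read $M c^{\mathsf T}=0$, and after the change of basis they read $\tilde M\,\tilde c^{\mathsf T}=0$, where $\tilde M$ is obtained from $M$ by right-multiplying the columns of the $v$-th block by $(A_{v+1}^{\mathsf T})^{-1}$.

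The central step is to recognize that $\tilde M$ is assembled by exactly the recipe defining $A_{t-1}$ in Lemma \ref{lem:rankDi} with $t=\delta-1$, carrying all $n$ columns instead of a size-$(t+r)$ selection. Indeed, factoring $a^{(b+si+k_j)v}=a^{siv}a^{(b+k_j)v}$ and $\sigma^{u+si+k_j}(\beta)=\sigma^{si}\!\big(\sigma^{k_j}(\sigma^u(\beta))\big)$, the rows with $i=0$ form the block matrix $A_0=(\tilde D_0\mid\cdots\mid\tilde D_{\ell-1})$ of \eqref{eq:tildeDi} built on the normal basis $\mathcal E=\{\beta,\ldots,\sigma^{m-1}(\beta)\}$ and the bases $\mB_v=\mathcal E\cdot a^{bv}$, while passing from the rows of index $i-1$ to those of index $i$ is one application of $\psi^{(s,s)}_{a,\sigma}$, cf.\ \eqref{eq:composition}. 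Here I would use crucially that $\E\subseteq\K$ is fixed by $\sigma$: hence each $A_{v+1}\in\GL(m,\E)$ sends the normal basis to another $\K$-basis of $\LL$ and commutes with every power of $\sigma$ and with the scalars coming from $\K$, so the columns of $\tilde M$ are again evaluations with respect to a (now block-dependent) tuple of $\LL/\K$-bases $\mB'_v$. Since Mart\'inez-Pe\~nas' linearized Reed-Solomon codes are MDS for any tuple of bases and the admissible evaluation points $1,a,\ldots,a^{\ell-1}$ (kept in distinct conjugacy classes by $\gcd(n,s)=1$ and by $\ell$ being coprime to $m$ and to the characteristic of $\K$), Lemma \ref{lem:rankDi} applies verbatim to any size-$(t+r)$ selection of columns of $\tilde M$.

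To conclude, let $j_v$ be the number of support coordinates of $\tilde c$ inside the $v$-th block, so that $\sum_v j_v=w\le\delta+r-1=t+r$. Completing these to a selection with $\sum_v j_v=t+r$ (possible since $t+r\le n$) and applying Lemma \ref{lem:rankDi} with $t=\delta-1$, the resulting matrix has rank $r+t=\delta+r-1$, i.e.\ full column rank; in particular the $w$ columns of $\tilde M$ indexed by the support of $\tilde c$ are $\LL$-linearly independent. The relation $\tilde M\,\tilde c^{\mathsf T}=0$ then forces $\tilde c=0$, contradicting $\tilde c\neq 0$. Hence $\dd_{\HH}(\C\cdot A)\ge\delta+r$ for every admissible $A$, and Theorem \ref{thm:SRminHamm} yields $\dd_{\SR}(\C)\ge\delta+r$. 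I expect the true obstacle to lie in the third paragraph: verifying that the $\E$-change of basis is compatible with the $\sigma$-powers and the $\K$-scalars so that $\tilde M$ is genuinely of the form $A_{t-1}$ over a legitimate tuple of $\LL/\K$-bases, and that the numerical hypotheses on $s,\ell,m$ preserve the distinct-conjugacy-class condition, so that the MDS property---and with it Lemma \ref{lem:rankDi}---survives the change of basis.
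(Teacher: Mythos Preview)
Your proposal is correct and follows essentially the same route as the paper: derive the syndrome equations from the defining-set hypothesis, recognise the resulting parity-check matrix as a block matrix of the shape treated in Lemma~\ref{lem:rankDi} (with $t=\delta-1$), and then use Theorem~\ref{thm:SRminHamm} to reduce the sum-rank statement to the Hamming statement for $\C\cdot A$ over all block-diagonal $A\in\GL(\mathbf{n},\E)$. The ``obstacle'' you flag in the last paragraph is exactly what the paper checks by the direct computation $E_iA_i$: since $A_{v+1}\in\GL(m,\E)\subseteq\GL(m,\K)$ is fixed entrywise by $\sigma$, the transformed tuple $(\alpha_1^{(i)},\ldots,\alpha_m^{(i)}):=(\beta,\sigma(\beta),\ldots,\sigma^{m-1}(\beta))A_{i}$ is again a $\K$-basis of $\LL$ and commutes with all the $\sigma$-powers and $\K$-scalars, so Lemma~\ref{lem:rankDi} applies unchanged.
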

\begin{proof}
Let $c(x,z)=\sum_{t=0}^{m-1}f_t(x)z^{t}\in\C$, with $f_t(x) \in\mS^\prime$. We can write $$c(x,z)=\sum_{t=0}^{m-1}\sum_{h=0}^{\ell-1}f_{t,h}x^{h}z^{t},$$
where $f_{t,h}\in\F$. For any $0\leq i\leq \delta-2$ and $0\leq j\leq r$, let $u_{i,j}:=a^{b+si+k_j}$ and $v_{i,j}:=\sigma^{si+k_j}(\beta)$. Now, we apply the total evaluation map  $\Ev_{u_{i,j},v_{i,j}}$ defined in \eqref{eq:toteval} to $c(x,z)$, obtaining
\begin{align*}
     0 =\Ev_{u_{i,j},v_{i,j}}(c(x,z)) &= \sum_{t=0}^{m-1}\left(\sum_{h=0}^{\ell-1}f_{t,h}u_{i,j}^{h}\right)\sigma^{t}(v_{i,j})v_{i,j}^{-1}   \\
     &= \sum_{h=0}^{\ell-1}\left(\sum_{t=0}^{m-1} f_{t,h}\sigma^{t}(v_{i,j}) \right)u_{i,j}^{h}v_{i,j}^{-1}.
\end{align*}
The expression above holds for every $0\leq i\leq \delta-2$ and $0\leq j\leq r$, which means that the codeword $c(x,z)$ is in the left kernel of the matrix
\begin{equation}\label{eq:leftkernel}
    \left(\begin{array}{c|c|c|c}
   {E}_0 & E_1 & \cdots & E_{\ell-1} \\
   \sigma^s(E_0)& \sigma^s(E_1)a^s & \cdots & \sigma^s(E_{\ell-1})a^{s}\\
   \vdots & \vdots& & \vdots  \\
   \sigma^{(\delta-2)s}(E_0) &\sigma^{(\delta-2)s}(E_1)a^{(\delta-2)s} & \cdots & \sigma^{(\delta-2)s}(E_{\ell-1})a^{(\delta-2)s}
   \end{array}\right),
\end{equation} 
 where 
 \begin{equation}\label{eq:kernel}
 E_i =\begin{pmatrix}
 \sigma^{k_0}(\beta) a^{(b+k_0)i} & \sigma^{k_0 + 1}(\beta)a^{(b+k_0)i} &\cdots & \sigma^{k_0+m-1}(\beta)a^{(b+k_0)i} \\
  \sigma^{k_1}(\beta) a^{(b+k_1)i} & \sigma^{k_1 + 1}(\beta)a^{(b+k_1)i} &\cdots & \sigma^{k_1+m-1}(\beta)a^{(b+k_1)i}\\
  \vdots & \vdots & \cdots &\vdots \\
   \sigma^{k_r}(\beta) a^{(b+k_r)i} & \sigma^{k_r + 1}(\beta)a^{(b+k_r)i} &\cdots & \sigma^{k_r+m-1}(\beta)a^{(b+k_r)i}
 \end{pmatrix}.
 \end{equation}
Let $w\leq \delta + r - 1$ and assume there is a codeword $\bar{c}(x,z)\in\C$ of Hamming weight $\wt(\bar{c}(x,z))=w$, which means that $\bar{c}(x,z)=\sum_{j=0}^{w}f_j(x)z^{h_j}$, where  $f_j(x)\in \mS^\prime$ and $h_j\in\{0,\dots, m-1\}$. By following the same reasoning of above, $\bar{c}(x,z)$ is in the left kernel of a matrix obtained from \eqref{eq:kernel} by selecting only the columns corresponding to the nonzero positions of $\bar{c}(x,z)$. By Lemma \ref{lem:rankDi} (with $t=\delta-1$), such a selected submatrix has rank equal to $w$. This implies that $\bar{c}(x,z)=0$. So $\bar{c}(x,z)=0$ is the only codeword having Hamming weight at most $\delta+r-1$, which shows that $\dd_{\HH}(\C) \geq \delta+r$. Now, let $A=\diag(A_0,\dots,A_{\ell-1})$ be any matrix  with $A_i\in\GL(m,\E)$. Observe that by multiplying the matrix in  \eqref{eq:leftkernel} with $A$, we obtain 
\begin{equation*}
    \left(\begin{array}{c|c|c|c}
   {E}_0A_0 & E_1A_1 & \cdots & E_{\ell-1}A_{\ell-1} \\
   \sigma^s(E_0A_0)& \sigma^s(E_1A_1)a^s & \cdots & \sigma^s(E_{\ell-1}A_{\ell-1})a^{s}\\
   \vdots & \vdots& & \vdots  \\
   \sigma^{(\delta-2)s}(E_0A_0) &\sigma^{(\delta-2)s}(E_1A_1)a^{(\delta-2)s} & \cdots & \sigma^{(\delta-2)s}(E_{\ell-1}A_{\ell-1})a^{(\delta-2)s}
   \end{array}\right),
\end{equation*} 
 where 
 \begin{equation*}
 E_iA_i =\begin{pmatrix}
 \sigma^{k_0}(\alpha^{(i)}_1) a^{(b+k_0)i} & \sigma^{k_0}(\alpha^{(i)}_2)a^{(b+k_0)i} &\cdots & \sigma^{k_0}(\alpha^{(i)}_m)a^{(b+k_0)i} \\
  \sigma^{k_1}(\alpha^{(i)}_1) a^{(b+k_1)i} & \sigma^{k_1}(\alpha^{(i)}_2)a^{(b+k_1)i} &\cdots & \sigma^{k_1}(\alpha^{(i)}_m)a^{(b+k_1)i}\\
  \vdots & \vdots & \cdots &\vdots \\
   \sigma^{k_r}(\alpha^{(i)}_1) a^{(b+k_r)i} & \sigma^{k_r}(\alpha^{(i)}_2)a^{(b+k_r)i} &\cdots & \sigma^{k_r}(\alpha^{(i)}_m)a^{(b+k_r)i}
 \end{pmatrix}.
 \end{equation*}
 and $(\alpha^{(i)}_1,\ldots,\alpha^{(i)}_m):=(\beta, \sigma(\beta),\ldots,\sigma^{m-1}(\beta))A_i$.
Thus,  applying the same reasoning as above together with Lemma \ref{lem:rankDi}, we get  $\dd_{\HH}(\C \cdot A)\geq \delta+r$.
We conclude the proof using Theorem~\ref{thm:SRminHamm}.
\end{proof}

With the same strategy of the previous proof, we can show the Hartmann-Tzeng bound for the sum-rank metric, of which we omit the proof.

\begin{theorem}(Sum-Rank HT bound)\label{thm:HT_SR}
 Let $n=m\ell$ and $\C\subseteq\F^n$ be a cyclic-skew-cyclic code.  Let $b,\delta, r,t_1,t_2$ be integers, such that $\gcd(n,t_1)=1$, $\gcd(n,t_2)<\delta$. Let $a\in\K$ be a primitive $\ell$-th root of unity and $\beta$ be a normal element of $\LL/\K$.
If 
$$
\{(a^{b+it_1+st_2}, \sigma^{it_1+st_2}(\beta))\in \K \times \LL^\ast \, : \,  0\leq i\leq \delta-2, 0\leq s\leq r\}\subseteq T_\C,
$$
 then $\dd_{\SR}(\C)\geq \delta + r$.
\end{theorem}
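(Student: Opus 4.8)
The plan is to mirror the proof of Theorem~\ref{thm:Roos_SR} almost verbatim, replacing only its rank input, Lemma~\ref{lem:rankDi}, by a Hartmann--Tzeng analogue. By Theorem~\ref{thm:SRminHamm} it suffices to prove that $\dd_{\HH}(\C\cdot A)\geq\delta+r$ for every $A=\diag(A_0,\dots,A_{\ell-1})$ with $A_i\in\GL(m,\E)$. So I would fix such an $A$, take a nonzero $\bar c\in\C$ whose image in $\C\cdot A$ has Hamming weight $w\leq\delta+r-1$, and apply the total evaluation maps $\Ev_{a^{b+it_1+st_2},\,\sigma^{it_1+st_2}(\beta)}$ of \eqref{eq:toteval} for $0\leq i\leq\delta-2$ and $0\leq s\leq r$, exactly as in the Roos proof. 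Dividing each relation by the nonzero scalar $\sigma^{it_1+st_2}(\beta)^{-1}$ places the coordinate vector of $\bar c$ in the kernel of the matrix whose $(i,s)$-th row is $\psi_{a,\sigma}^{(it_1+st_2,\,it_1+st_2)}(R)$, where $R$ is the base row built from $(\beta,\sigma(\beta),\dots,\sigma^{m-1}(\beta))A_h$ (scaled by $a^{bh}$) on each block $h$; this is the exact analogue of \eqref{eq:leftkernel}--\eqref{eq:kernel}. Restricting to the $w$ columns of the support $S$, the theorem reduces to showing that this matrix has rank $w$, which forces $\bar c=0$.

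The crux is therefore a Hartmann--Tzeng replacement for Lemma~\ref{lem:rankDi}: writing $\psi:=\psi_{a,\sigma}^{(1,1)}$, which by \eqref{eq:composition} satisfies $\psi^e=\psi_{a,\sigma}^{(e,e)}$ and has order $\operatorname{lcm}(\ell,m)=n$ (using $\gcd(\ell,m)=1$), I must show that the rows $\{\psi^{\,it_1+st_2}(R)\mid 0\leq i\leq\delta-2,\ 0\leq s\leq r\}$ restricted to any set $S$ of $w\leq\delta+r-1$ columns span the whole space $\LL^S$. Note first that $\psi$ acts coordinatewise, hence restricts to a $\sigma$-semilinear bijection of $\LL^S$. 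I would set $\mathcal{W}_s:=\langle\,\psi^{\,it_1+s't_2}(R)|_S\mid 0\leq i\leq\delta-2,\ 0\leq s'\leq s\,\rangle$, so that $\mathcal{W}_s=\mathcal{W}_{s-1}+\psi^{\,t_2}(\mathcal{W}_{s-1})$, and prove by induction on $s$ that $\dim\mathcal{W}_s\geq\min(w,\delta-1+s)$; for $s=r$ this gives $\dim\mathcal{W}_r=w$, as required.

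The base case $\mathcal{W}_0=\langle\psi^{\,it_1}(R)|_S\mid 0\leq i\leq\delta-2\rangle$ is the $r=0$ (BCH) situation and is handled by the coprimality argument already used in Lemma~\ref{lem:rankDi}: if the dimension failed to increase at some step, the corresponding subspace would be $\psi^{t_1}$-invariant, hence $\psi$-invariant since $\gcd(n,t_1)=1$, hence would contain every shift $\psi^e(R)|_S$; as these span $\LL^S$ by the MDS property of linearized Reed--Solomon codes (Lemma~\ref{lem:rankDi-basecase}), the dimension would already equal $w$, a contradiction, giving $\dim\mathcal{W}_0=\min(w,\delta-1)$. The inductive step is where the hypothesis $g:=\gcd(n,t_2)<\delta$ enters, and it is the main obstacle, precisely because $t_2$ need not be coprime to $n$. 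In the bad case $\dim\mathcal{W}_s=\dim\mathcal{W}_{s-1}<w$ one gets $\psi^{t_2}(\mathcal{W}_{s-1})\subseteq\mathcal{W}_{s-1}$, hence equality since $\psi$ is bijective, so $\mathcal{W}_{s-1}$ is invariant under $\langle\psi^{t_2}\rangle=\langle\psi^{g}\rangle$ and therefore contains $\psi^{\,it_1+jg}(R)|_S$ for all $i\in\{0,\dots,\delta-2\}$ and all $j$.

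It remains to see that these exponents already exhaust $\mathbb{Z}/n$. The values $it_1+jg$ cover the cosets $it_1+g\,\mathbb{Z}/n$, and since $\gcd(t_1,g)=1$ these cosets are indexed by $i\bmod g$; because $g<\delta$, i.e.\ $g\leq\delta-1$, the $\delta-1$ consecutive values $i=0,\dots,\delta-2$ realize every residue modulo $g$, so all $g$ cosets occur and the exponents run over all of $\mathbb{Z}/n$. Hence $\mathcal{W}_{s-1}\supseteq\langle\psi^e(R)|_S\mid e\in\mathbb{Z}/n\rangle=\LL^S$, forcing $\dim\mathcal{W}_{s-1}=w$ and contradicting $\dim\mathcal{W}_{s-1}<w$. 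This completes the rank lemma; feeding it into the evaluation computation above, and letting $A$ range over all block-diagonal matrices, yields $\dd_{\SR}(\C)\geq\delta+r$ through Theorem~\ref{thm:SRminHamm} exactly as in the proof of Theorem~\ref{thm:Roos_SR}.
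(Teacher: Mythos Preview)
Your proposal is correct and is precisely the ``same strategy of the previous proof'' that the paper invokes while omitting details: you reproduce the evaluation set-up of the Roos bound, reduce via Theorem~\ref{thm:SRminHamm}, and replace Lemma~\ref{lem:rankDi} by its Hartmann--Tzeng variant. The only new ingredient, which you handle correctly, is the stabilization step under $\psi^{t_2}$: invariance under $\langle\psi^{t_2}\rangle=\langle\psi^{g}\rangle$ with $g=\gcd(n,t_2)$ combined with the $\delta-1\geq g$ values of $it_1$ (and $\gcd(t_1,g)=1$) forces all $\psi$-shifts of $R|_S$ into $\mathcal W_{s-1}$, whence $\dim\mathcal W_{s-1}=w$ by Lemma~\ref{lem:rankDi-basecase}. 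This is exactly the intended adaptation.
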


\section{Product Codes}\label{sec:product_codes}
Let $\C_1$ and $\C_2$ be two codes defined over the same field. In this section we study the metric properties inherited by the code $\C_1\otimes \C_2$ according to the metrics which are considered on $\C_1$ and $\C_2$. We show that if we fix  the partition of $n=\ell N$ given by $\mathbf{n}=(N,\ldots,N)$ and we consider  $\C_1$ endowed with the Hamming metric and $\C_2$ with the rank metric, the tensor product $\C_1\otimes \C_2$ is naturally endowed with the sum-rank metric with respect to the partition $\mathbf{n}=(N,\ldots,N)$. We, then, specialize to the case in which we consider the first code to be a cyclic code and the second one to be a skew-cyclic code.

     \subsection{Parameters of product codes}

    Given two linear codes defined over the same field, we recall the notion of \emph{product code}. Consider the vector representation of a tensor product, defined as follows. Let $\ell, N$ be arbitrary positive integers. For $u=(u_0,\ldots,u_{\ell-1}) \in \F^{\ell}, v=(v_0,\ldots,v_{N-1}) \in \F^{N}$, we define $a\otimes b$ to be the vectorization in $\F^{\ell N}$ of their tensor product, that is
    $$ u\otimes v:=(u_0v \mid u_1 v \mid \ldots \mid u_{\ell-1}v).$$
    
    \begin{definition}
     Let $\C_1$ be an $[\ell,k_1]_{\F}$ code and let $\C_2$ be an $[N,k_2]_{\F}$ code. The \emph{product code} between $\C_1$ and $\C_2$ is the $[\ell N,k_1,k_2]_{\F}$ code 
     $$\C_1\otimes \C_2:=\langle \{u\otimes v \mid u \in \C_1, v \in \C_2 \} \rangle_{\F}.$$
    \end{definition}
    
    Equivalently, one can look at elements $u\otimes v$ without performing a vectorization process, simply as matrices in $\F^{\ell \times N}$. In this framework, the product code $\C_1 \otimes \C_2$ consists of all matrices in $\F^{\ell \times N}$ whose columns belong to $\C_1$ and whose rows belong to $\C_2$.
    
    Notice that, the product code $\C_1 \otimes \C_2$ naturally  inherits a partition of its entries as $(N,\ldots, N)$. Moreover, it also induces naturally a metric, which is inherited from the metrics we are equipping the two constituting codes. If we choose the first code $\C_1$ to be an $[\ell, k_1]_{\F}$ code endowed with the Hamming metric, and $\C_2$ to be an $[N, k_2]_{\F}$ code endowed with the rank metric for $\F/\E$, the product code is naturally endowed with the sum-rank metric for $\F/\E$, with respect to the partition $\mathbf{n}=(N,\ldots, N)$. In other words, the sum-rank metric can be seen as the tensor product between the Hamming and the rank metric. This is formally explained in the following two results.
    
    \begin{lemma}\label{lem:sumrank_product_codewords}
     For every  $u \in \F^\ell$, $v \in \F^N$, we have
     $\wt_{\SR}(u\otimes v)=\wt_{\HH}(u)\wt_{\RR}(v)$.
    \end{lemma}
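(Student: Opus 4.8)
The plan is to unpack both sides of the claimed identity $\wt_{\SR}(u\otimes v)=\wt_{\HH}(u)\wt_{\RR}(v)$ directly from the definitions, using the block structure of $u\otimes v$. Recall that by construction $u\otimes v=(u_0 v \mid u_1 v \mid \ldots \mid u_{\ell-1}v)$, so the $i$th coordinate block of $u\otimes v$ is precisely $u_i v \in \F^N$. The sum-rank weight is the sum over blocks of the rank weights, so
\begin{equation*}
\wt_{\SR}(u\otimes v)=\sum_{i=0}^{\ell-1}\wt_{\RR}(u_i v).
\end{equation*}
Thus everything reduces to understanding $\wt_{\RR}(u_i v)$ for a fixed scalar $u_i \in \F$.

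First I would treat the single-block computation $\wt_{\RR}(u_i v)$. The rank weight of $u_i v$ is $\dim_\E \langle u_i v_0,\ldots,u_i v_{N-1}\rangle_\E$. If $u_i=0$, then $u_i v=0$ and this dimension is $0$. If $u_i\neq 0$, then multiplication by the nonzero scalar $u_i\in\F^\ast$ is an $\E$-linear bijection on $\F$ (indeed an $\F$-linear automorphism), so it maps the $\E$-span $\langle v_0,\ldots,v_{N-1}\rangle_\E$ isomorphically onto $\langle u_i v_0,\ldots,u_i v_{N-1}\rangle_\E$; hence the two spans have the same $\E$-dimension and $\wt_{\RR}(u_i v)=\wt_{\RR}(v)$. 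Combining the two cases, $\wt_{\RR}(u_i v)=\wt_{\RR}(v)$ if $u_i\neq 0$ and $0$ otherwise.

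Substituting this back, only the blocks with $u_i\neq 0$ contribute, each contributing exactly $\wt_{\RR}(v)$, so
\begin{equation*}
\wt_{\SR}(u\otimes v)=\sum_{i\,:\,u_i\neq 0}\wt_{\RR}(v)=|\{i:u_i\neq 0\}|\cdot\wt_{\RR}(v)=\wt_{\HH}(u)\,\wt_{\RR}(v),
\end{equation*}
which is exactly the claim.

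I do not expect any genuine obstacle here: the statement is essentially a bookkeeping identity, and the only substantive point is the observation that scaling by a nonzero element of $\F$ preserves the $\E$-dimension of an $\E$-span. The subtlety worth stating carefully is that $u_i$ lies in the big field $\F$ while the rank is computed over the subfield $\E$; the relevant fact is that $u_i\in\F^\ast$ acts as an $\E$-linear automorphism of $\F$, which is immediate since $\F$ is an $\E$-algebra and $u_i$ is invertible. I would make sure to isolate the degenerate case $u_i=0$ explicitly so that the correspondence between ``$u_i\neq 0$'' and ``the $i$th block contributes $\wt_{\RR}(v)$'' exactly reproduces the Hamming weight of $u$.
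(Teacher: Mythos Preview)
Your proof is correct and follows essentially the same approach as the paper: write $u\otimes v$ in blocks, sum the rank weights, and split according to whether $u_i=0$. The paper's argument is terser (it leaves the step $\wt_{\RR}(u_iv)=\wt_{\RR}(v)$ for $u_i\neq 0$ implicit), whereas you spell out why multiplication by a nonzero element of $\F$ preserves the $\E$-dimension; otherwise the arguments are identical.
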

    
    \begin{proof}
    For $u=(u_0,\ldots, u_{\ell-1})$, we write $u\otimes v=(u_0v \mid \ldots \mid u_{\ell-1} v)$. By definition, we have
    $$\wt_{\SR}(u\otimes v)=\sum_{i=0}^{\ell-1} \wt_{\RR}(u_iv)=\sum_{i: u_i \neq 0}\wt_{\RR}(u_iv)=\wt_{\HH}(u)\wt_{\RR}(v).$$
    \end{proof}
    
    \begin{proposition}\label{prop:sumrank_product_code}
     Let $\C_1$ be an $[\ell, k_1]_{\F}$ code and let $\C_2$ be an $[N, k_2]_{\F/\E}$ code. Then 
     $$\dd_{\SR}(\C_1 \otimes \C_2)=\dd_{\HH}(\C_1)\dd_{\RR}(\C_2).$$
    %  where $\dd_{\HH}$ denotes the minimum Hamming distance, and $\dd_{\RR}$ the minimum rank distance for $\F/\E$.
    \end{proposition}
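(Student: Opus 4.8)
The plan is to prove the two inequalities $\dd_{\SR}(\C_1 \otimes \C_2) \leq \dd_{\HH}(\C_1)\dd_{\RR}(\C_2)$ and $\dd_{\SR}(\C_1 \otimes \C_2) \geq \dd_{\HH}(\C_1)\dd_{\RR}(\C_2)$ separately. The upper bound is immediate from Lemma~\ref{lem:sumrank_product_codewords}: choosing $u \in \C_1$ with $\wt_{\HH}(u)=\dd_{\HH}(\C_1)$ and $v \in \C_2$ with $\wt_{\RR}(v)=\dd_{\RR}(\C_2)$, the codeword $u \otimes v \in \C_1 \otimes \C_2$ is nonzero and satisfies $\wt_{\SR}(u\otimes v)=\dd_{\HH}(\C_1)\dd_{\RR}(\C_2)$, so the minimum distance cannot exceed this value.

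\medskip

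The harder direction is the lower bound, and this is where the main work lies. The key idea is to exploit Theorem~\ref{thm:SRminHamm}, which expresses the minimum sum-rank distance of any code as the minimum Hamming distance over all block-diagonal transformations $A=\diag(A_0,\ldots,A_{\ell-1})$ with $A_i \in \GL(N,\E)$. First I would observe that such a transformation interacts cleanly with the tensor structure: for $u \otimes v$ one has $(u\otimes v)A=(u_0 vA_0 \mid \ldots \mid u_{\ell-1} vA_1)$, and more generally the code $(\C_1 \otimes \C_2)\cdot A$ still has the property that each of its columns lies in $\C_1$ (the Hamming-constituent is untouched by a transformation acting within each block), while the rows are transformed copies of codewords of $\C_2$. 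Thus I would fix an arbitrary such $A$ and bound $\dd_{\HH}((\C_1 \otimes \C_2)\cdot A)$ from below by $\dd_{\HH}(\C_1)\dd_{\RR}(\C_2)$, after which Theorem~\ref{thm:SRminHamm} yields the result.

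\medskip

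To carry out this bound, I would take a nonzero codeword $c \in (\C_1 \otimes \C_2)\cdot A$ and view it as a matrix $C \in \F^{\ell \times N}$ whose $i$th row is $c^{(i)}$. Writing $c=c'A$ with $c' \in \C_1 \otimes \C_2$, the matrix $C'$ associated to $c'$ has all its columns in $\C_1$, and $C=C'A$ means each block $c^{(i)}$ equals $(c')^{(i)}A_i$, so $\wt_{\RR}(c^{(i)})=\wt_{\RR}((c')^{(i)})$ since each $A_i$ is invertible over $\E$ (multiplication by an invertible $\E$-matrix preserves rank weight). Hence it suffices to bound $\wt_{\SR}(c')=\sum_i \wt_{\RR}((c')^{(i)})$ for $c' \in \C_1 \otimes \C_2$. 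Let $S=\{i : (c')^{(i)} \neq 0\}$ be the set of nonzero blocks. I would argue that the nonzero rows of $C'$, indexed by $S$, form a set of rows of a matrix whose \emph{every} column lies in $\C_1$; since at least one column of $C'$ is nonzero (as $c' \neq 0$), that column is a nonzero codeword of $\C_1$ supported exactly within $S$, forcing $|S| \geq \dd_{\HH}(\C_1)$. Moreover, each nonzero block $(c')^{(i)}$ is a row of $C'$, and I would show these rows are themselves codewords of $\C_2$ (since the rows of any element of $\C_1 \otimes \C_2$ belong to $\C_2$), so $\wt_{\RR}((c')^{(i)}) \geq \dd_{\RR}(\C_2)$ for every $i \in S$. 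Combining, $\wt_{\SR}(c') \geq |S| \cdot \dd_{\RR}(\C_2) \geq \dd_{\HH}(\C_1)\dd_{\RR}(\C_2)$.

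\medskip

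The main obstacle is the bookkeeping in the last step: one must be careful that the support set $S$ of nonzero blocks genuinely contains the support of a nonzero column codeword of $\C_1$. Concretely, if $(c')^{(i)}=0$ then the $i$th entry of \emph{every} column is zero, so each column of $C'$ is supported inside $S$; picking any nonzero column gives a nonzero element of $\C_1$ whose Hamming weight is at most $|S|$, hence $|S| \geq \dd_{\HH}(\C_1)$. This is the crux that ties together the Hamming count of blocks and the rank weight within each block, and once it is clean the two-sided estimate closes the argument.
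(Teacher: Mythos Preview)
Your argument is correct, and the core combinatorial step (third and fourth paragraphs) is exactly the paper's proof: pick a nonzero column of the matrix $C'$, note it is a nonzero element of $\C_1$ so has at least $\dd_{\HH}(\C_1)$ nonzero entries, and each corresponding row is a nonzero element of $\C_2$ contributing at least $\dd_{\RR}(\C_2)$ to the sum-rank weight.

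The difference is that you wrap this in an unnecessary detour through Theorem~\ref{thm:SRminHamm}. After fixing $A$ and writing $c=c'A$, you observe $\wt_{\RR}(c^{(i)})=\wt_{\RR}((c')^{(i)})$ and declare ``it suffices to bound $\wt_{\SR}(c')$.'' But bounding $\wt_{\SR}(c')$ from below for an arbitrary nonzero $c'\in\C_1\otimes\C_2$ is \emph{precisely} the direct lower bound on $\dd_{\SR}(\C_1\otimes\C_2)$; the passage through $A$ and Hamming weights has done nothing. (Implicitly you are also using $\wt_{\HH}(c)\geq\wt_{\SR}(c)=\wt_{\SR}(c')$ to close the loop, which you do not state.) The paper simply starts with a nonzero $c\in\C_1\otimes\C_2$, views it as a matrix, and runs your column-then-rows argument with no reference to Theorem~\ref{thm:SRminHamm}. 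Removing your second paragraph and the first half of the third leaves exactly the paper's proof.
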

    
    \begin{proof}
    Let $t:=\dd_{\HH}(\C_1)$. For ease of exposition, consider a nonzero codeword $c$ in $\C_1\otimes \C_2$ as a matrix in $\F^{\ell\times N}$, where each column belongs to $\C_1$ and each row belongs to $\C_2$. We denote the rows of $c$ as $c^{(1)}, \ldots, c^{(\ell)}$, which is consistent with \eqref{eq:vector_partition}. Since $c$ is nonzero, then there exists at least a nonzero column. Such a column is an element of $\C_1$, and therefore has at least $t$ nonzero entries, say $i_1,\ldots, i_t$. The corresponding rows are therefore nonzero elements of $\C_2$. Hence, 
    $$\wt_{\SR}(c)=\sum_{j=1}^t \wt_{\RR}(c^{(i_j)})\geq \sum_{j=1}^t \dd_{\RR}(\C_2)=t\cdot \dd_{\RR}(\C_2)=\dd_{\HH}(\C_1)\dd_{\RR}(\C_2).$$
    
     Since for any $u \in \C_1$, $v \in \C_2$, the element $u \otimes v$ belongs to $\C_1 \otimes \C_2$, then by Lemma \ref{lem:sumrank_product_codewords} we get the equality.
    \end{proof}

\begin{remark}\label{rem:tensor_metrics}
 Observe that the above discussion and results on tensor products is valid in a more general setting. The proofs of Lemma \ref{lem:sumrank_product_codewords} and of Proposition \ref{prop:sumrank_product_code} are still true if we substitute the rank metric on the second code with any other metric $\dd_{\XX}$, and the sum-rank metric in the ambient space $\F^{\ell N}$ with a metric obtained from $\wt_{\XX}$ by extending it on $\ell$ copies of $\F^N$ using additivity. More precisely, let $\wt_{\XX}:\F^N\rightarrow \mathbb R_{\geq 0}$ be a weight function that induces a metric $\dd_{\XX}:\F^N\times \F^N\rightarrow \mathbb R_{\geq 0}$ given by $\dd_{\XX}(a,b):=\wt_{\XX}(a-b)$. One can define the map
 $$\begin{array}{rccl} 
 \wt_{\SX}:&\F^{\ell N} & \longrightarrow & \mathbb R_{\geq 0}\\
 &(c^{(1)} \mid \ldots \mid c^{(\ell)})& \longmapsto & \sum\limits_{i=1}^{\ell} \wt_{\XX}(c^{(i)}).
 \end{array} $$
 Analogously,  $\dd_{\SR}(a,b):=\wt_{\SR}(a-b)$. With this setting, for every  $u \in \F^\ell$, $v \in \F^N$, we have
     $\wt_{\SR}(u\otimes v)=\wt_{\HH}(u)\wt_{\RR}(v)$, and 
     for every $[\ell, k_1]_{\F}$ code $\C_1$ and $[N, k_2]_{\F}$ code $\C_2$, we have 
     $$\dd_{\SX}(\C_1 \otimes \C_2)=\dd_{\HH}(\C_1)\dd_{\XX}(\C_2).$$
 In other words, \begin{equation}\label{eq:tensor_metrics} (\F^{\ell N},\dd_{\SX})=(\F^\ell,\dd_{\HH})\otimes (\F^N,\dd_{\XX}).\end{equation}
\end{remark}    
    
     \subsection{Algebraic Structure of cyclic-skew-cyclic codes}

    Proposition \ref{prop:sumrank_product_code} shows that the tensor product of a rank-metric code and a Hamming-metric code naturally inherits a structure of a sum-rank metric code, as also explained in Remark \ref{rem:tensor_metrics} in a more general setting; see \eqref{eq:tensor_metrics}. 
    
    However, the metric properties are not the only properties that behave well with the tensoring operation. Indeed, we are going to see  that also the cyclicity structure is somehow preserved. More precisely, we aim to show that the tensor product of a cyclic code and a skew-cyclic code is a cyclic-skew-cyclic code. We first prove the following result on the automorphism group.

    \begin{proposition}\label{prop:tensor_product_autgroup}
  Let $\C_1$ be an $[\ell, k_1]_{\F}$ code and let $\C_2$ be an $[N, k_2]_{\F/\E}$ code. Then
  $$\Aut_{\SR}(\C_1 \otimes \C_2)\supseteq \langle \iota_{\HH}(\Aut_{\HH}(\C_1)), \iota_{\RR}(\Aut_{\RR}(\C_2))\rangle, $$
where, 
    $$\begin{array}{rcl}\iota_{\HH}: ((\F^*)^\ell \rtimes (\Aut(\F)\times \mathcal S_\ell))& \longrightarrow & (((\F^*)^{\ell} \times \GL(\mathbf{n},\E)) \rtimes \mathcal S_{\lambda(\mathbf{n})})\rtimes \Aut(\F)\\
    (\mathbf{a},\theta_1,\pi) & \longmapsto & (\mathbf{a}, \underbrace{I_N, \ldots, I_N}_{\ell \mbox{ \tiny{times} } }, \pi, \theta_1),
    \end{array}$$
        $$\begin{array}{rcl}\iota_{\RR}: \GL(n,\E)\rtimes \Aut(\F)) & \longrightarrow & (((\F^*)^{\ell} \times \GL(\mathbf{n},\E)) \rtimes \mathcal S_{\lambda(\mathbf{n})})\rtimes \Aut(\F) \\
        (M,\theta_2) & \longmapsto & (\mathbf{1}, \underbrace{M, \ldots, M}_{\ell \mbox{ \tiny{times} } }, \mathrm{id}, \theta_2).
    \end{array}$$
  \end{proposition}

\begin{proof}
 Consider a nonzero codeword $c$ in $\C_1\otimes \C_2$ as a matrix in $\F^{\ell\times N}$, where each column belongs to $\C_1$ and each row belongs to $\C_2$. We denote the rows of $c$ as $c^{(1)}, \ldots, c^{(\ell)}\in\C_2$ and the columns of $c$ as $c_{(1)},\ldots,c_{(N)}\in\C_1$. Let $\psi_{\HH}:=(\mathbf{a},\theta_1,\pi)\in\Aut_{\HH}(\C_1)$ and let $\psi_{\RR}:(M,\theta_2)\in \Aut_{\RR}(\C_2)$ be arbitrary. We only need to show that $\iota_{\HH}(\psi_{\HH}),\iota_{\RR}(\psi_{\RR}) \in \Aut_{\SR}(\C_1 \otimes\C_2)$.
 
 First, consider  $\iota_{\HH}(\psi_{\HH})$ acting on $c$. Writing 
 $$ c=\left(\begin{array}{c|c|c}c_{(1)} & \ldots & c_{(N)} \end{array} \right),$$
 we have that $\iota_{\HH}(\psi_{\HH})$ acts on each $c_{(i)}$ as $\psi_{\HH}$. Since $\psi_{\HH}\in\Aut_{\HH}(\C_1)$, then each column of $\iota_{\HH}(\psi_{\HH})(c)$ still belongs to $\C_1$, and thus $\iota_{\HH}(\psi_{\HH})(c)\in\C_1\otimes \C_2$. For the arbitrariness of the codeword $c$, we deduce that $\iota_{\HH}(\psi_{\HH})\in\Aut_{\SR}(\C_1 \otimes \C_2)$.
 
 Now consider  $\iota_{\RR}(\psi_{\HH})$ acting on $c$. Writing 
 $$ c=\left(\begin{array}{c}\;\;c^{(0)}\;\; \\ \hline \vdots \\ \hline c^{(\ell-1)} \end{array} \right),$$
 we have that $\iota_{\RR}(\psi_{\RR})$ acts on $c$ as 
 $$ \iota_{\RR}(\psi_{\RR})(c)=\left(\begin{array}{c}\;\;\theta_2(c^{(0)})M \;\; \\ \hline \vdots \\ \hline \theta_2(c^{(\ell-1)})M \end{array} \right).$$
 Therefore, $\iota_{\RR}(\psi_{\RR})$ naturally acts on each row of $c$ as $\psi_{\RR}$, showing that $\iota_{\RR}(\psi_{\RR})(c)\in \C_1\otimes \C_2$. For the arbitrariness of the codeword $c$, we deduce that $\iota_{\RR}(\psi_{\RR})\in\Aut_{\SR}(\C_1 \otimes \C_2)$.
\end{proof}

We first give a group-theoretic proof of the fact that the tensor product of a cyclic code and a skew-cyclic code is cyclic-skew-cyclic. 

   \begin{proposition}\label{prop:product_cyclic_skewcyclic}
   Let $\C_1$ be a cyclic code of length $\ell$ over $\F$ and let $\C_2$ be a skew-cyclic code of length $N$ over $\F$. Then, $\C_1\otimes \C_2$ is (equivalent to) a cyclic-skew-cyclic code.
  \end{proposition}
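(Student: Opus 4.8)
The plan is to argue entirely at the level of automorphism groups, using Proposition~\ref{prop:tensor_product_autgroup} to transport the cyclic structure of $\C_1$ and the skew-cyclic structure of $\C_2$ into $\C_1\otimes\C_2$. First I would record the group-theoretic form of the two hypotheses. By (the reasoning of) Remark~\ref{rem:cyclicity}, since $\C_1$ is cyclic there exist $\mathbf b\in(\F^*)^\ell$ and an $\ell$-cycle $\pi_1$ with $g_1:=(\mathbf b,\mathrm{id},\pi_1)\in\Aut_{\HH}(\C_1)$; in the strict case one may simply take $\mathbf b=\mathbf 1$ and $\pi_1=\bar\tau=(1\ldots\ell)$. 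Analogously, by the corresponding characterization for the rank metric, since $\C_2$ is skew-cyclic with respect to $\theta$ there exist $M\in\GL(N,\E)$ and an $N$-cycle $\pi_2$ with $g_2:=(1,M^{-1}P_{\pi_2}M,\theta)\in\Aut_{\RR}(\C_2)$, the strict case being $M=I_N$, $\pi_2=\bar\pi=(1\ldots N)$.

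Next I would push these two elements into $\Aut_{\SR}(\C_1\otimes\C_2)$. By Proposition~\ref{prop:tensor_product_autgroup} the images
$$\iota_{\HH}(g_1)=(\mathbf b,I_N,\ldots,I_N,\pi_1,\mathrm{id})\quad\text{and}\quad \iota_{\RR}(g_2)=(\mathbf 1,M^{-1}P_{\pi_2}M,\ldots,M^{-1}P_{\pi_2}M,\mathrm{id},\theta)$$
both lie in $\Aut_{\SR}(\C_1\otimes\C_2)$. I would then compare them with the two elements that, by the final remark of Section~\ref{sec:isometries}, characterize a cyclic-skew-cyclic code with respect to $\theta$ for the partition $\mathbf n=(N,\ldots,N)$, namely $(\mathbf 1,(I_N,\ldots,I_N),\bar\tau,\mathrm{id})$ and $(\mathbf 1,(P_{\bar\pi},\ldots,P_{\bar\pi}),\mathrm{id},\theta)$. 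In the genuinely (skew-)cyclic case $\mathbf b=\mathbf 1$, $\pi_1=\bar\tau$, $M=I_N$, $\pi_2=\bar\pi$, the images $\iota_{\HH}(g_1)$ and $\iota_{\RR}(g_2)$ are \emph{exactly} these two generators, so $\C_1\otimes\C_2$ is cyclic-skew-cyclic and the statement follows immediately.

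To cover the general statement (and justify the ``equivalent to''), I would observe that $\iota_{\HH}(g_1)$ and $\iota_{\RR}(g_2)$ are conjugate, inside the semilinear isometry group acting through $\Psi_{\SR}$, to the two standard generators above. I would exhibit a single isometry $\mathfrak g$ assembled from the conjugator that normalizes the $\ell$-cycle $\pi_1$ to $\bar\tau$ together with the diagonal rescaling on the $(\F^*)^\ell$-component, and from the block-uniform matrix $(M,\ldots,M)$ on the $\GL(\mathbf n,\E)$-component, and verify that $\mathfrak g^{-1}\iota_{\HH}(g_1)\mathfrak g$ and $\mathfrak g^{-1}\iota_{\RR}(g_2)\mathfrak g$ are simultaneously the two standard cyclic-skew-cyclic generators. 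The only real obstacle is precisely this simultaneity: the normalization of the block permutation and scalars acts on the $\mathcal S_{\lambda(\mathbf n)}$- and $(\F^*)^\ell$-parts, while the per-block conjugation by $M$ acts on the $\GL(\mathbf n,\E)$-part, and since the \emph{same} $M$ is applied in every block it commutes with the block permutation; hence the two normalizations are independent and are achieved by one $\mathfrak g$. Applying $\mathfrak g$ to $\C_1\otimes\C_2$ then yields a code whose sum-rank automorphism group contains both standard generators, i.e.\ a cyclic-skew-cyclic code, so $\C_1\otimes\C_2$ is equivalent to a cyclic-skew-cyclic code.
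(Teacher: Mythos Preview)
Your approach is essentially the same as the paper's: use Proposition~\ref{prop:tensor_product_autgroup} to push the cyclic generator of $\Aut_{\HH}(\C_1)$ and the skew-cyclic generator of $\Aut_{\RR}(\C_2)$ into $\Aut_{\SR}(\C_1\otimes\C_2)$ via $\iota_{\HH}$ and $\iota_{\RR}$, and then identify the images with the block-shift $\rho$ and the $\theta$-inblock shift $\phi$. Your ``strict case'' paragraph is exactly the paper's proof.

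The only difference is that you additionally treat the situation where $\C_1$ and $\C_2$ are merely \emph{equivalent} to cyclic and skew-cyclic codes, and argue by simultaneous conjugation. This is a genuine (and correct) extension, but it is not needed here: the hypotheses state that $\C_1$ \emph{is} cyclic and $\C_2$ \emph{is} skew-cyclic, so the standard generators $(\mathbf 1,\mathrm{id},\bar\tau)$ and $(P_{\bar\pi},\theta)$ already lie in the respective automorphism groups, and the paper simply observes $\iota_{\HH}(\mathbf 1,\mathrm{id},\bar\tau)=\rho$ and $\iota_{\RR}(P_{\bar\pi},\theta)=\phi$ to conclude that $\C_1\otimes\C_2$ is literally cyclic-skew-cyclic. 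Your extra conjugation argument would be the right move if the proposition were stated for codes equivalent to (skew-)cyclic ones, and your observation that the block-uniform $(M,\ldots,M)$ commutes with any block permutation is the key point making the simultaneous normalization work.
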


\begin{proof}
 By assumption, $\C_1$ is cyclic, so $(\mathbf{1},\pi_1,\mathrm{id}) \in \Aut_{\HH}(\C_1)$, where $\pi_1=(1\,2\,\cdots\,\ell)$ is the right-shift operator on $\F^\ell$. Furthermore, $\C_2$ is skew-cyclic with respect to $\theta$, hence $(P, \theta) \in \Aut_{\RR}(\C_2)$, where $P$ is the $N\times N$ permutation matrix associated to  $(1\,2\,\cdots\,N)$. 
 By Proposition \ref{prop:tensor_product_autgroup}, we have that $\iota_{\HH}((\mathbf{1},\pi_1,\mathrm{id})),\iota_{\RR}((P, \theta))\in \Aut_{\SR}(\C_1\otimes \C_2)$. However, it is easy to see that 
 $$ \iota_{\HH}((\mathbf{1},\pi_1,\mathrm{id}))=\rho, \qquad \iota_{\RR}((P, \theta))=\phi,$$
 where $\rho$ and $\phi$ are given in \eqref{eq:rho} and \eqref{eq:phi}, respectively. Hence, $\C_1\otimes \C_2$ is a cyclic-skew-cyclic code.
\end{proof}

    In the case described in Proposition \ref{prop:product_cyclic_skewcyclic}, in principle one might have difficulties in understanding the structure of the code $\C_1\otimes \C_2$ starting from the structure of $\C_1$ and $\C_2$. Also the proof does not really help, since it is based on group theory. However, as mentioned in Remark \ref{rem:PIR}, when $\mathrm{char}(\F)$ does not divide $\ell$, it was shown in \cite{martinez2020sum} that the ring
    $$\mathcal R':= \faktor{\left(\faktor{\F[x]}{(x^\ell - 1)}\right)[z;\theta]}{(z^m-1)}$$
    is a principal left ideal ring. The good news is that in this case we can determine the generator polynomial of $\nu(\C_1\otimes \C_2)$.
    
    First, we introduce the standard polynomial representations for cyclic and skew-cyclic codes. 
    We define the map $\mu_{\HH} :\F^\ell \rightarrow \mathcal S'$ as
    $$ \mu_{\HH}(u_0,\ldots,u_{\ell-1}) = \sum_{i=0}^{\ell-1} u_ix^{i},$$
    and the map $\nu_{\RR} :\F^N \rightarrow \mathcal S$ as
    $$ \nu_{\RR}(v_0,\ldots,v_{N-1}) = \sum_{j=0}^{N-1} v_jz^{j}.$$
    
    Here we give the explicit description of the generator polynomial of 
   $\nu(\C_1\otimes \C_2)\subseteq \mathcal R'$
   in terms of the generator polynomials of $\mu_{\HH}(\C_1)$ and $\nu_{\RR}(\C_2)$. 
   
   \begin{theorem}\label{thm:product_cyclic_skewcyclic} Suppose that $\ell$ and $\mathrm{char}(\F)$ are coprime.  
    Let $\C_1$ be an $[\ell,k_1]_{\F}$ cyclic code with $\mu_{\HH}(\C_1)=(f_1(x))$ and let $\C_2$ be an $[N,k_2]_{\F/\E}$ skew-cyclic with $\nu_{\RR}(\C_2)=(f_2(z))$. %\red{Moreover, assume that $f_1(x)\in\E[x]$}. 
    Then, 
    $\nu(\C_1\otimes \C_2)=(f_1(x)f_2(z))$.
   \end{theorem}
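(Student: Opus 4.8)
The plan is to reduce the entire statement to the single multiplicative identity
\[
\nu(u\otimes v)=\mu_{\HH}(u)\,\nu_{\RR}(v)\qquad\text{in }\mR',
\]
valid for all $u\in\F^\ell$, $v\in\F^N$. First I would verify this by a direct expansion: writing $u\otimes v=(u_0v\mid\cdots\mid u_{\ell-1}v)$ gives $\nu(u\otimes v)=\sum_{j}\big(\sum_i u_iv_j x^i\big)z^j=\sum_j v_j\,\mu_{\HH}(u)\,z^j$, and since $\mS'=\F[x]/(x^\ell-1)$ is commutative the scalars $v_j\in\F$ commute with $\mu_{\HH}(u)$, so this collapses to $\mu_{\HH}(u)\sum_j v_jz^j=\mu_{\HH}(u)\nu_{\RR}(v)$. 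The point is that the $v_j$ are never moved past $z$, so no twist by $\theta$ intervenes. Consequently $\nu(\C_1\otimes\C_2)=\langle\,\mu_{\HH}(u)\nu_{\RR}(v)\mid u\in\C_1,\ v\in\C_2\,\rangle_\F$, where $\mu_{\HH}(u)$ ranges over $(f_1)=\mu_{\HH}(\C_1)\subseteq\mS'$ and $\nu_{\RR}(v)$ over $\mS f_2=\nu_{\RR}(\C_2)$.

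The key step, and the one I expect to be the real obstacle, is to show that $f_1(x)$ is \emph{central} in $\mR'$; this is where the standing hypothesis that $\K$ contains all the $\ell$-th roots of unity is essential. Since $f_1$ is a generator polynomial of a cyclic code it divides $x^\ell-1$ in $\F[x]$, so all its roots are $\ell$-th roots of unity and hence lie in $\K$. Now $\sigma\in\Gal(\LL/\K)$ fixes $\K$ pointwise, so it fixes every root of $f_1$; factoring $f_1=\prod_i(x-\zeta_i)$ over $\LL$ and applying $\sigma$ to the coefficients yields $\theta(f_1)=\sigma(f_1)=\prod_i(x-\sigma(\zeta_i))=f_1$. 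Thus $f_1\in\F^\theta[x]=\E[x]$. Because $\E=\F^\theta$ is fixed by $\theta$, and because $\theta(x)=x$ forces $x$ to commute with $z$, every element of $\E[x]$—in particular $f_1$—lies in the center of $\mR'$.

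With centrality in hand both inclusions follow quickly. For $\subseteq$: any generator $\mu_{\HH}(u)\nu_{\RR}(v)=s_1f_1\,s_2f_2$, with $s_1\in\mS'$ and $s_2\in\mS$, equals $s_1s_2\,f_1f_2$ after sliding the central $f_1$ past $s_2$, hence lies in $\mR'f_1f_2=(f_1f_2)$. For $\supseteq$: writing an arbitrary $r\in\mR'$ as $\sum_j g_j(x)z^j$ and using $z^jf_1=f_1z^j$, I get $rf_1f_2=\sum_j(g_jf_1)(z^jf_2)$; here $g_jf_1\in(f_1)=\mu_{\HH}(\C_1)$ and $z^jf_2\in\mS f_2=\nu_{\RR}(\C_2)$, so each summand has the form $\mu_{\HH}(u_j)\nu_{\RR}(v_j)=\nu(u_j\otimes v_j)$ by the identity above, whence $rf_1f_2\in\nu(\C_1\otimes\C_2)$. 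Alternatively, $\supseteq$ is immediate from Proposition~\ref{prop:product_cyclic_skewcyclic} and the characterization of cyclic-skew-cyclic codes as left ideals of $\mR'$: the left ideal $\nu(\C_1\otimes\C_2)$ contains $f_1f_2=\nu(u_0\otimes v_0)$ and therefore contains $(f_1f_2)$. Combining the two inclusions gives $\nu(\C_1\otimes\C_2)=(f_1f_2)$, as claimed.
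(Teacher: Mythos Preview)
Your argument is correct and takes a genuinely different route from the paper for the inclusion $\nu(\C_1\otimes\C_2)\subseteq(f_1f_2)$. The paper establishes only the reverse inclusion $(f_1f_2)\subseteq\nu(\C_1\otimes\C_2)$ directly (via $\nu^{-1}(f_1f_2)=a^\top\otimes b$ and the left-ideal structure from Proposition~\ref{prop:product_cyclic_skewcyclic}) and then forces equality by a dimension count: the $k_1k_2$ elements $x^iz^jf_1f_2$ have pairwise distinct leading monomials, so $\dim_\F(f_1f_2)\geq k_1k_2=\dim_\F(\C_1\otimes\C_2)$. You instead prove both inclusions algebraically, the key input being that $f_1$ is central in $\mR'$, which lets you slide $f_1$ past any $s_2\in\mS$ in $s_1f_1\cdot s_2f_2$.

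Your centrality argument also yields a small bonus the paper does not record: under the standing hypothesis that $\K$ contains the $\ell$-th roots of unity, every divisor of $x^\ell-1$ in $\F[x]$ automatically has coefficients in $\E=\F^\theta$, so the extra assumption ``$f_1(x)\in\E[x]$'' appearing later in Theorem~\ref{thm:product_defining_sets} is in fact redundant in this setting. Conversely, the paper's dimension-count proof never uses that $f_1$ is central (or that $\K$ contains the $\ell$-th roots of unity), so it is marginally more portable to contexts where that standing hypothesis is dropped. For the inclusion $\supseteq$, your two arguments and the paper's all coincide in spirit; your direct computation $rf_1f_2=\sum_j(g_jf_1)(z^jf_2)$ is a pleasant way to avoid invoking Proposition~\ref{prop:product_cyclic_skewcyclic}.
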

    
    \begin{proof}
    Let $g(x,z):=f_1(x)f_2(z)$ and define $I:=(g(x,z))\subseteq \mathcal R'$. First, observe that $\deg_x(g)=\deg_x(f_1)=\ell-k_1$ and $\deg_z(g)=\deg_z(f_2)=N-k_2$. Now, consider the set
    $$\mathcal P:=\{ x^iz^j g(x,z)\,: \, 0\leq i \leq k_1-1, 0 \leq j \leq k_2-1 \},$$
    which is clearly contained in $I$. Moreover, $\nu^{-1}(\mathcal P)$ is a set of $\F$-linearly independent vectors in $\F^{\ell N}$: to see that, choose any term ordering on the monomials $\{x^iz^j \,:\, i,j \in \N \}$. The elements of $\mathcal P$ have all distinct leading monomials, and one can immediately deduce that they are $\F$-linearly independent. Therefore, 
    $$\dim_{\F}(\nu^{-1}(I))\geq |\mathcal P|=k_1k_2.$$
    Now it is enough to show that $\nu^{-1}(I)\subseteq \C_1\otimes \C_2$ to deduce that they coincide. In particular, we only need to show that $\nu^{-1}(g(x,z)) \in \C_1\otimes \C_2$. If we write
    $$f_1(x)=\sum_{i=0}^{\ell-1} a_ix^{i}, \qquad f_2(z)=\sum_{j=0}^{N-1} b_jz^{j},$$
    then
    $$ f_1(x)f_2(z)=\sum_{i,j}a_ib_jx^iz^j$$
    and the matrix representation of $\nu^{-1}(f_1(x)f_2(z))$ is
    $$ \left(\begin{array}{c|c|c} b_0a^\top & \ldots & b_{N-1}a^\top \end{array} \right)=\left(\begin{array}{c} a_0b \\ \hline \vdots \\ \hline  a_{\ell-1} b \end{array} \right)=a^\top \otimes b,$$
    where $a=(a_0,\ldots,a_{\ell -1}), b=(b_0,\ldots, b_{N-1})$. Thus, it is clear that $\nu^{-1}(f_1(x)f_2(z))\in \C_1\otimes \C_2$.
    \end{proof}

    \section{Product Bounds for Cyclic and Skew-Cyclic Codes}\label{sec:product_bounds}
We can now determine the defining set of the tensor product of a cyclic codes and a cyclic-skew-cyclic code, under a certain assumption on the first code.  We introduce the following notation. Let $X,Y$ be two sets, and $A\subseteq X$, $B\subseteq Y$. We denote by $A\uplus B$ the set
$$A\uplus B:=(A\times Y)\cup(X\times B)=\{(x,y) \in X \times Y \mid x \in A \mbox{ or } y \in B\}.$$

We recall that, for an $[\ell,k]_\F$ cyclic code $\C_1$ such that $\mu_{\HH}(\C_1)=(f_1(x))$, the defining set is  
$$ T_{\C_1}^{\HH}=\{ a \in \LL \mid a^\ell=1, f_1(a)=0 \}.$$

Furthermore, for an $[N,k]_{\F/\E}$ skew-cyclic code $\C_2$ such that $\nu_{\RR}(\C_2)=(f_2(z))$, the defining set is  
$$ T_{\C_2}^{\RR}=\{ \beta \in \LL^* \mid \Ev_\beta^\sigma(f_2(z))=0 \}.$$

\begin{theorem}\label{thm:product_defining_sets}
  Let $\C_1$ be an $[\ell,k_1]_{\F}$ cyclic code with $\mu_{\HH}(\C_1)=(f_1(x))$ and let $\C_2$ be an $[N,k_2]_{\F}$ skew-cyclic code with $\nu_{\RR}(\C_2)=(f_2(z))$. Moreover, assume that $f_1(x)\in\E[x]$. Then, 
  $$T_{\C_1\otimes \C_2}=T^{\HH}_{\C_1}\uplus T^{\RR}_{\C_2}.  $$
  %\{(a,\beta) \in \K \times \LL^* \mid a \in T^{\HH}_{\C_1} \mbox{ or } \beta \in T^{\RR}_{\C_2}  \}.$$
\end{theorem}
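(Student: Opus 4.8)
The plan is to use the explicit generator polynomial of the product code obtained in Theorem~\ref{thm:product_cyclic_skewcyclic}, namely $\nu(\C_1\otimes\C_2)=(g(x,z))$ with $g(x,z)=f_1(x)f_2(z)$, and to compute the defining set $T_{\C_1\otimes\C_2}$ directly from its definition. Recall that
$$T_{\C_1\otimes\C_2}=\{(a,\beta)\in\K\times\LL^\ast \mid a^\ell=1,\ \Ev_{a,\beta}(g(x,z))=0\},$$
so the whole proof reduces to factoring the total evaluation $\Ev_{a,\beta}(f_1(x)f_2(z))$ into a product of an $x$-evaluation and a $z$-evaluation, and then reading off when this product vanishes.

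First I would recall that $\Ev_{a,\beta}=\Ev_\beta^\sigma\circ\Ev_{a,z}$, so I would apply the maps in that order. Applying $\Ev_{a,z}$ to $g=f_1(x)f_2(z)$: since $\Ev_{a,z}$ is a ring homomorphism (equation \eqref{eq:evaluation_ring_homo}), and since $f_1(x)$ is a polynomial purely in $x$ while $f_2(z)$ is purely in $z$, I get $\Ev_{a,z}(f_1(x)f_2(z))=f_1(a)f_2(z)$, where $f_1(a)\in\LL$ is a scalar and $f_2(z)$ lives in $\faktor{\LL[z;\sigma]}{(z^N-1)}$. Here the hypothesis $f_1(x)\in\E[x]$ is what I expect to need: it guarantees that $f_1(a)$ is a genuine central/commuting scalar (fixed appropriately) so that pulling it outside the skew-polynomial evaluation $\Ev_\beta^\sigma$ is unproblematic, i.e. $\Ev_\beta^\sigma(f_1(a)f_2(z))=f_1(a)\,\Ev_\beta^\sigma(f_2(z))$. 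Then using the $\sigma$-polynomial description from the Remark, $\Ev_\beta^\sigma(f_2(z))=f_2^\sigma(\beta)\beta^{-1}$, the whole evaluation becomes $\Ev_{a,\beta}(g)=f_1(a)\cdot f_2^\sigma(\beta)\beta^{-1}$, a product of two factors in $\LL$.

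Since $\LL$ is a field, this product is zero if and only if $f_1(a)=0$ or $\Ev_\beta^\sigma(f_2(z))=0$. Comparing with the defining sets recalled just before the statement, $f_1(a)=0$ (together with $a^\ell=1$) says exactly that $a\in T^{\HH}_{\C_1}$, while $\Ev_\beta^\sigma(f_2(z))=0$ says exactly that $\beta\in T^{\RR}_{\C_2}$. Thus a pair $(a,\beta)$ with $a^\ell=1$ lies in $T_{\C_1\otimes\C_2}$ precisely when $a\in T^{\HH}_{\C_1}$ or $\beta\in T^{\RR}_{\C_2}$, which is the definition of $T^{\HH}_{\C_1}\uplus T^{\RR}_{\C_2}$ (with $X=\{a\in\K:a^\ell=1\}$ and $Y=\LL^\ast$). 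This yields the claimed equality.

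The main obstacle I anticipate is the middle step: justifying that the scalar $f_1(a)$ commutes out of the skew evaluation $\Ev_\beta^\sigma$. Since $\sigma$ acts nontrivially on $\LL$, one must be careful that a left (resp. right) scalar does not get twisted when passing through the noncommutative evaluation; this is exactly where $f_1(x)\in\E[x]$ (so that $a\in\K$ and $f_1(a)$ is fixed by $\sigma$, as $a$ is an $\ell$-th root of unity in $\K$ and the coefficients lie in $\E=\F^\theta\cap\ldots$) makes $f_1(a)$ commute with everything in sight and keeps the factorization clean. A secondary care point is purely bookkeeping: confirming that the generator polynomial used is $f_1(x)f_2(z)$ in $\mathcal R'$ and that the defining set of the product code is computed with respect to the same generator, which is guaranteed by Theorem~\ref{thm:product_cyclic_skewcyclic} once $\gcd(\ell,\mathrm{char}(\F))=1$ places us in the principal-ideal-ring setting.
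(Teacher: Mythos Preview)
Your proposal is correct and follows essentially the same approach as the paper: use Theorem~\ref{thm:product_cyclic_skewcyclic} to get the generator $g(x,z)=f_1(x)f_2(z)$, factor the total evaluation as $\Ev_{a,\beta}(g)=f_1(a)\,\Ev_\beta^\sigma(f_2(z))$ via the ring homomorphism property \eqref{eq:evaluation_ring_homo} of $\Ev_{a,z}$, and conclude by reading off the vanishing locus in the field $\LL$. One minor remark: the step $\Ev_\beta^\sigma(f_1(a)f_2(z))=f_1(a)\,\Ev_\beta^\sigma(f_2(z))$ already holds by left $\LL$-linearity of $\Ev_\beta^\sigma$ (visible from the $\sigma$-polynomial description $\Ev_\beta^\sigma(f)=f^\sigma(\beta)\beta^{-1}$), so you do not strictly need $f_1(a)\in\K$ for that particular pullout---but your caution there is harmless.
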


\begin{proof}
 Let $g(x,z)=f_1(x)f_2(z)$. By Theorem \ref{thm:product_cyclic_skewcyclic} we have that $\nu(\C_1\otimes \C_2)=(g(x,z))$ and
 $$T_{\C_1\otimes \C_2}=\{(a,\beta) \in\K\times \LL^\ast \mid a^\ell = 1, \Ev_{a,\beta}(g(x,z)) = 0\}.$$
 However, we have 
 \begin{align*}\Ev_{a,\beta}(g(x,z))&=\Ev_{\beta}^\sigma(\Ev_{a,z}(g(x,z)))=\Ev_{\beta}^\sigma(\Ev_{a,z}(f_1(x)f_2(z))) \\&= \Ev_{\beta}^\sigma(f_1(a)f_2(z))=f_1(a)\Ev_{\beta}^\sigma(f_2(z)),\end{align*}
 where the second to last equality follows from \eqref{eq:evaluation_ring_homo}.
 From this, we immediately conclude the proof.
\end{proof}

We are now going to combine the Roos bound and the Hartmann-Tzeng bound for cyclic-skew-cyclic codes obtained in Section \ref{sec:Roos_HT_bounds}, with the results on the minimum sum-rank distance of a product code  of Proposition \ref{prop:sumrank_product_code}. This allows  to derive new bounds on the minimum (Hamming) distance of cyclic codes and on the minimum (rank) distance of  skew-cyclic codes, and it will be done with the aid of Theorem \ref{thm:product_defining_sets}.

\begin{theorem}[Roos Product Bound]\label{thm:Roos_product}
  Let $\C_1$ be an $[\ell,k_1,\dd_{\HH}]_{\F}$ cyclic code whose generator polynomial belongs to $\E[x]$ and let $\C_2$ be an $[m,k_2,\dd_{\RR}]_{\F/\E}$ skew-cyclic code. Let $b, s,\delta, k_0,\dots, k_r$ be integers, such that $\gcd(\ell m,s)=1$, $k_i<k_{i+1}$ for $i=0,\dots,r-1$, $k_r-k_0 \leq \delta + r -2$. Let $a\in\K$ be a primitive $\ell$-th root of unity and $\beta$ be a normal element of $\LL/\K$. If
  $$ \{(a^{b+si+k_j}, \sigma^{si+k_j}(\beta))\in \K \times \LL^\ast \, : \,  0\leq i\leq \delta-2, 0\leq j\leq r\}\subseteq  T^{\HH}_{\C_1}\uplus T^{\RR}_{\C_2},$$
  then
     $$\dd_{\HH}\geq \left\lceil \frac{\delta+r}{\dd_{\RR}}\right\rceil \qquad \mbox{ and } \qquad \dd_{\RR}\geq \left\lceil\frac{\delta+r}{\dd_{\HH}}  \right\rceil. $$
\end{theorem}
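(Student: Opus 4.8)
The plan is to obtain both inequalities simultaneously by applying the sum-rank Roos bound (Theorem~\ref{thm:Roos_SR}) to the product code $\C_1\otimes\C_2$ and then unfolding its minimum sum-rank distance as a product.

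First I would record that $\C_1\otimes\C_2$ is a genuine cyclic-skew-cyclic code of length $n=\ell m$ for the partition $\mathbf{n}=(m,\ldots,m)$. Indeed, since $\ell$ is coprime with the characteristic (in force throughout this setting, and needed already for the principal-ideal-ring structure in Remark~\ref{rem:PIR}) and $f_1(x)\in\E[x]$, Theorem~\ref{thm:product_cyclic_skewcyclic} gives $\nu(\C_1\otimes\C_2)=(f_1(x)f_2(z))$, and Theorem~\ref{thm:product_defining_sets} identifies its defining set as $T_{\C_1\otimes\C_2}=T^{\HH}_{\C_1}\uplus T^{\RR}_{\C_2}$. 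The hypothesis of the statement is precisely that the configuration $\{(a^{b+si+k_j},\sigma^{si+k_j}(\beta)) : 0\leq i\leq\delta-2,\ 0\leq j\leq r\}$ lies in $T^{\HH}_{\C_1}\uplus T^{\RR}_{\C_2}$, hence in $T_{\C_1\otimes\C_2}$.

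Next I would verify that the numerical hypotheses of Theorem~\ref{thm:Roos_SR} are met for $\C:=\C_1\otimes\C_2$: one has $\gcd(n,s)=\gcd(\ell m,s)=1$, the chain $k_0<\cdots<k_r$, and $k_r-k_0\leq\delta+r-2$, while $a$ is a primitive $\ell$-th root of unity and $\beta$ a normal element of $\LL/\K$. The sum-rank Roos bound then yields $\dd_{\SR}(\C_1\otimes\C_2)\geq\delta+r$.

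Finally I would convert this into the two claimed bounds. By Proposition~\ref{prop:sumrank_product_code} the minimum sum-rank distance of the product factorizes as $\dd_{\SR}(\C_1\otimes\C_2)=\dd_{\HH}(\C_1)\cdot\dd_{\RR}(\C_2)=\dd_{\HH}\cdot\dd_{\RR}$, so that $\dd_{\HH}\cdot\dd_{\RR}\geq\delta+r$. Since $\dd_{\HH}$ and $\dd_{\RR}$ are positive integers, dividing through and using that an integer dominating a real number also dominates its ceiling gives $\dd_{\HH}\geq\lceil(\delta+r)/\dd_{\RR}\rceil$ and $\dd_{\RR}\geq\lceil(\delta+r)/\dd_{\HH}\rceil$. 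This theorem is essentially a bookkeeping assembly of previously established results, so I expect no substantial obstacle; the only point demanding care is confirming that $\C_1\otimes\C_2$ is a bona fide cyclic-skew-cyclic code (rather than merely equivalent to one, as in Proposition~\ref{prop:product_cyclic_skewcyclic}), since the sum-rank Roos bound and the defining-set formalism are stated for honest cyclic-skew-cyclic codes — and this is exactly what the $\E[x]$-hypothesis on $f_1$ secures through Theorem~\ref{thm:product_cyclic_skewcyclic}.
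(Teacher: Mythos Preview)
Your proposal is correct and follows essentially the same route as the paper's proof: apply Theorem~\ref{thm:product_defining_sets} to identify $T_{\C_1\otimes\C_2}$, invoke the sum-rank Roos bound (Theorem~\ref{thm:Roos_SR}) on the product code, and combine with the factorization $\dd_{\SR}(\C_1\otimes\C_2)=\dd_{\HH}\cdot\dd_{\RR}$ from Proposition~\ref{prop:sumrank_product_code}. One small misattribution: the hypothesis $f_1(x)\in\E[x]$ is actually needed for Theorem~\ref{thm:product_defining_sets} (the defining-set computation), not for Theorem~\ref{thm:product_cyclic_skewcyclic}, and Proposition~\ref{prop:product_cyclic_skewcyclic} already yields genuine (not merely equivalent) cyclic-skew-cyclicity---but this does not affect the validity of your argument.
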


\begin{proof}
 Let us consider the product code $\C_1\otimes \C_2$. By Theorem \ref{thm:product_cyclic_skewcyclic}, $\C_1\otimes \C_2$ is a cyclic-skew-cyclic code whose defining set is $T_{\C_1 \otimes \C_2}=T^{\HH}_{\C_1}\uplus T^{\RR}_{\C_2}$, by Theorem \ref{thm:product_defining_sets}. Using the Roos bound for sum-rank metric codes of Theorem \ref{thm:Roos_SR}, we hence deduce that 
 $$\dd_{\SR}(\C_1\otimes \C_2) \geq \delta+r.$$
 On the other hand, by Proposition \ref{prop:sumrank_product_code}, we also know that
 $$\dd_{\SR}(\C_1\otimes \C_2)=\dd_{\HH}(\C_1)\dd_{\RR}(\C_2).$$
 Combining this with the previous inequality, we obtain the desired bounds.
\end{proof}

We remark that clearly the two bounds of Theorem \ref{thm:Roos_product} are equivalent. However, we decided to explicitly state both in order to emphasize the fact that we can lower bound both the minimum Hamming distance of the cyclic code $\C_1$ and the minimum rank distance of the skew-cyclic code $\C_2$.
In particular, depending on the situation, we may design a suitable skew-cyclic (resp. cyclic) code to obtain new lower bounds on the minimum Hamming (resp. rank) distance of a given cyclic (resp. skew-cyclic) code.

We conclude this section deriving in the same way an Hartmann-Tzeng product bound. The result is based on Theorem \ref{thm:HT_SR} instead of Theorem \ref{thm:Roos_SR}, but the rest of its proof is completely analogous to the proof of Theorem \ref{thm:Roos_product}. For this reason we omit it. 

\begin{theorem}[Hartmann-Tzeng Product Bound]\label{thm:HT_product} 
  Let $\C_1$ be an $[\ell,k_1,\dd_{\HH}]_{\F/\E}$ cyclic code whose generator polynomial belongs to $\E[x]$ and let $\C_2$ be an $[m,k_2,\dd_{\RR}]_{\F}$ skew-cyclic code. Let $b,\delta, r,t_1,t_2$ be integers, such that $\gcd(\ell m,t_1)=1$, $\gcd(\ell m,t_2)<\delta$.  Let $a\in\K$ be a primitive $\ell$-th root of unity and $\beta$ be a normal element of $\LL/\K$. If
  $$\{(a^{b+it_1+st_2}, \sigma^{it_1+st_2}(\beta))\in \K \times \LL^\ast \, : \,  0\leq i\leq \delta-2, 0\leq s\leq r\}\subseteq  T^{\HH}_{\C_1}\uplus T^{\RR}_{\C_2},$$
  then
     $$\dd_{\HH}\geq \left\lceil \frac{\delta+r}{\dd_{\RR}}\right\rceil, \qquad \dd_{\RR}\geq \left\lceil\frac{\delta+r}{\dd_{\HH}}  \right\rceil. $$
\end{theorem}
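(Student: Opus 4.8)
The plan is to mirror the proof of Theorem \ref{thm:Roos_product} verbatim, swapping the sum-rank Roos bound for the sum-rank Hartmann--Tzeng bound. First I would form the product code $\C_1 \otimes \C_2$. Since $\C_1$ is cyclic of length $\ell$ and $\C_2$ is skew-cyclic of length $m$, and since $\gcd(\ell,\mathrm{char}(\F))=1$ is implicit in the setting, Theorem \ref{thm:product_cyclic_skewcyclic} guarantees that $\C_1 \otimes \C_2$ is a cyclic-skew-cyclic code with $\nu(\C_1 \otimes \C_2)=(f_1(x)f_2(z))$. Then, using the standing hypothesis that the generator polynomial $f_1(x)$ of $\C_1$ lies in $\E[x]$, Theorem \ref{thm:product_defining_sets} identifies its defining set as $T_{\C_1 \otimes \C_2}=T^{\HH}_{\C_1}\uplus T^{\RR}_{\C_2}$.

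Next I would feed this defining set into the sum-rank Hartmann--Tzeng bound. The hypothesis of the present theorem says exactly that the set $\{(a^{b+it_1+st_2},\sigma^{it_1+st_2}(\beta)) : 0\leq i\leq \delta-2,\ 0\leq s\leq r\}$ is contained in $T^{\HH}_{\C_1}\uplus T^{\RR}_{\C_2}=T_{\C_1 \otimes \C_2}$, which is precisely the form required by Theorem \ref{thm:HT_SR} applied to the code $\C_1 \otimes \C_2$ of length $n=\ell m$. Here I must check that the arithmetic hypotheses match: $\gcd(\ell m,t_1)=1$ and $\gcd(\ell m,t_2)<\delta$ are imposed with $n=\ell m$, $a$ is a primitive $\ell$-th root of unity, and $\beta$ is normal in $\LL/\K$, so all the assumptions of Theorem \ref{thm:HT_SR} are met. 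This yields $\dd_{\SR}(\C_1 \otimes \C_2)\geq \delta+r$.

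Finally I would combine this with Proposition \ref{prop:sumrank_product_code}, which gives the exact factorization $\dd_{\SR}(\C_1 \otimes \C_2)=\dd_{\HH}(\C_1)\,\dd_{\RR}(\C_2)=\dd_{\HH}\cdot\dd_{\RR}$. Substituting, $\dd_{\HH}\cdot\dd_{\RR}\geq \delta+r$, and dividing through by $\dd_{\RR}$ (resp. $\dd_{\HH}$) together with integrality of the minimum distances produces the two stated ceiling bounds $\dd_{\HH}\geq\lceil(\delta+r)/\dd_{\RR}\rceil$ and $\dd_{\RR}\geq\lceil(\delta+r)/\dd_{\HH}\rceil$.

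I do not expect a genuine obstacle here, since every ingredient is already established: the algebraic identification of the product code and its defining set (Theorems \ref{thm:product_cyclic_skewcyclic} and \ref{thm:product_defining_sets}), the sum-rank Hartmann--Tzeng bound (Theorem \ref{thm:HT_SR}), and the multiplicativity of the sum-rank distance on products (Proposition \ref{prop:sumrank_product_code}). The only point demanding minor care is bookkeeping of the hypotheses when invoking Theorem \ref{thm:HT_SR} on the length-$\ell m$ product code, in particular aligning the coprimality conditions stated in terms of $\ell m$ with the condition $\gcd(n,t_1)=1$ in Theorem \ref{thm:HT_SR}; this is exactly the step that was already handled in the Roos case, which is why the paper omits the details.
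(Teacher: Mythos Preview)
Your proposal is correct and matches the paper's intended argument exactly: the paper explicitly states that the proof of Theorem~\ref{thm:HT_product} is completely analogous to that of Theorem~\ref{thm:Roos_product}, using Theorem~\ref{thm:HT_SR} in place of Theorem~\ref{thm:Roos_SR}, which is precisely what you do.
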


\bigskip

\section*{Acknowledgments}

The work of Gianira N. Alfarano is supported by  Swiss National Science Foundation grant n. 188430.
The work of Antonia Wachter-Zeh is supported by the German Research Foundation (DFG) under Grant No. WA3907/1-1. The work of F. J. Lobillo is supported by SRA (State Research Agency / 10.13039/501100011033) under Grant No. PID2019-110525GB-I00.
\bigskip
\bibliographystyle{abbrv}
\bibliography{references.bib}

\begin{thebibliography}{10}

\bibitem{alfarano2021roos}
G.~N. Alfarano, F.~Lobillo, and A.~Neri.
\newblock Roos bound for skew cyclic codes in {H}amming and rank metric.
\newblock {\em Finite Fields and Their Applications}, 69:101772, 2021.

\bibitem{barra2015macwilliams}
A.~Barra and H.~Gluesing-Luerssen.
\newblock Mac{W}illiams extension theorems and the local--global property for
  codes over {F}robenius rings.
\newblock {\em Journal of Pure and Applied Algebra}, 219(4):703--728, 2015.

\bibitem{berger2003isometries}
T.~P. Berger.
\newblock Isometries for rank distance and permutation group of {G}abidulin
  codes.
\newblock {\em IEEE Transactions on Information Theory}, 49(11):3016--3019,
  2003.

\bibitem{bose1960further}
R.~C. Bose and D.~K. Ray-Chaudhuri.
\newblock Further results on error correcting binary group codes.
\newblock {\em Information and Control}, 3(3):279--290, 1960.

\bibitem{bose1960class}
R.~C. Bose and D.~K. Ray-Chaudhuri.
\newblock On a class of error correcting binary group codes.
\newblock {\em Information and control}, 3(1):68--79, 1960.

\bibitem{boucher2007skew}
D.~Boucher, W.~Geiselmann, and F.~Ulmer.
\newblock Skew-cyclic codes.
\newblock {\em Applicable Algebra in Engineering, Communication and Computing},
  18(4):379--389, 2007.

\bibitem{boucher2009codes}
D.~Boucher and F.~Ulmer.
\newblock Codes as modules over skew polynomial rings.
\newblock In {\em IMA International Conference on Cryptography and Coding},
  pages 38--55. Springer, 2009.

\bibitem{boucher2009coding}
D.~Boucher and F.~Ulmer.
\newblock Coding with skew polynomial rings.
\newblock {\em Journal of Symbolic Computation}, 44(12):1644--1656, 2009.

\bibitem{chaussade2009skew}
L.~Chaussade, P.~Loidreau, and F.~Ulmer.
\newblock Skew codes of prescribed distance or rank.
\newblock {\em Designs, Codes and Cryptography}, 50(3):267--284, 2009.

\bibitem{Delenclos/Leroy:2007}
J.~Delenclos and A.~Leroy.
\newblock Noncommutative symmetric functions and w-polynomials.
\newblock {\em Journal of Algebra and Its Applications}, 06(05):815--837, 2007.

\bibitem{Gomez/Lobillo/Navarro/Neri:2018}
J.~G{\'{o}}mez{-}Torrecillas, F.~J. Lobillo, G.~Navarro, and A.~Neri.
\newblock {Hartmann--Tzeng} bound and skew cyclic codes of designed {Hamming}
  distance.
\newblock {\em Finite Fields and Their Applications}, 50:84--112, 2018.

\bibitem{hartmann1972generalizations}
C.~R. Hartmann and K.~K. Tzeng.
\newblock Generalizations of the {BCH} bound.
\newblock {\em Information and control}, 20(5):489--498, 1972.

\bibitem{hocquenghem1959codes}
A.~Hocquenghem.
\newblock Codes correcteurs d'erreurs.
\newblock {\em Chiffers}, 2:147--156, 1959.

\bibitem{lam2008wedderburn}
T.~Lam, A.~Leroy, and A.~Ozturk.
\newblock Wedderburn polynomials over division rings, {II}.
\newblock {\em Contemporary Mathematics}, 456:73--98, 2008.

\bibitem{lam1988vandermonde}
T.-Y. Lam and A.~Leroy.
\newblock Vandermonde and {W}ronskian matrices over division rings.
\newblock {\em Journal of Algebra}, 119(2):308--336, 1988.

\bibitem{lam2004wedderburn}
T.~Y. Lam and A.~Leroy.
\newblock Wedderburn polynomials over division rings, {I}.
\newblock {\em Journal of Pure and Applied Algebra}, 186(1):43--76, 2004.

\bibitem{lu2005unified}
H.-f. Lu and P.~V. Kumar.
\newblock A unified construction of space-time codes with optimal
  rate-diversity tradeoff.
\newblock {\em IEEE Transactions on Information Theory}, 51(5):1709--1730,
  2005.

\bibitem{mahmood2016convolutional}
R.~Mahmood, A.~Badr, and A.~Khisti.
\newblock Convolutional codes with maximum column sum rank for network
  streaming.
\newblock {\em IEEE Transactions on Information Theory}, 62(6):3039--3052,
  2016.

\bibitem{martinez2018skew}
U.~Mart{\'\i}nez-Pe{\~n}as.
\newblock Skew and linearized {R}eed--{S}olomon codes and maximum sum rank
  distance codes over any division ring.
\newblock {\em Journal of Algebra}, 504:587--612, 2018.

\bibitem{martinez2019theory}
U.~Mart{\'\i}nez-Pe{\~n}as.
\newblock Theory of supports for linear codes endowed with the sum-rank metric.
\newblock {\em Designs, Codes and Cryptography}, 87(10):2295--2320, 2019.

\bibitem{martinez2020hamming}
U.~Mart{\'\i}nez-Pe{\~n}as.
\newblock Hamming and simplex codes for the sum-rank metric.
\newblock {\em Designs, Codes and Cryptography}, 88(8):1521--1539, 2020.

\bibitem{martinez2020sum}
U.~Mart{\'\i}nez-Pe{\~n}as.
\newblock Sum-rank {BCH} codes and cyclic-skew-cyclic codes.
\newblock {\em arXiv preprint arXiv:2009.04949}, 2020.

\bibitem{martinez2019reliable}
U.~Mart{\'\i}nez-Pe{\~n}as and F.~R. Kschischang.
\newblock Reliable and secure multishot network coding using linearized
  {R}eed-{S}olomon codes.
\newblock {\em IEEE Transactions on Information Theory}, 65(8):4785--4803,
  2019.

\bibitem{martinez2019universal}
U.~Mart{\'\i}nez-Pe{\~n}as and F.~R. Kschischang.
\newblock Universal and dynamic locally repairable codes with maximal
  recoverability via sum-rank codes.
\newblock {\em IEEE Transactions on Information Theory}, 65(12):7790--7805,
  2019.

\bibitem{morrison2014equivalence}
K.~Morrison.
\newblock Equivalence for rank-metric and matrix codes and automorphism groups
  of {G}abidulin codes.
\newblock {\em IEEE Transactions on Information Theory}, 60(11):7035--7046,
  2014.

\bibitem{neri2021twisted}
A.~Neri.
\newblock Twisted linearized reed-solomon codes: A skew polynomial framework.
\newblock {\em arXiv preprint arXiv:2105.10451}, 2021.

\bibitem{nobrega2010multishot}
R.~W. N{\'o}brega and B.~F. Uch{\^o}a-Filho.
\newblock Multishot codes for network coding using rank-metric codes.
\newblock In {\em 2010 Third IEEE International Workshop on Wireless Network
  Coding}, pages 1--6. IEEE, 2010.

\bibitem{ore1933theory}
O.~Ore.
\newblock Theory of non-commutative polynomials.
\newblock {\em Annals of mathematics}, pages 480--508, 1933.

\bibitem{roos1982generalization}
C.~Roos.
\newblock A generalization of the {BCH} bound for cyclic codes, including the
  {H}artmann-{T}zeng bound.
\newblock {\em Journal of Combinatorial Theory, Series A}, 33(2):229--232,
  1982.

\bibitem{roos1983new}
C.~Roos.
\newblock A new lower bound for the minimum distance of a cyclic code.
\newblock {\em IEEE Transactions on Information Theory}, 29(3):330--332, 1983.

\bibitem{zeh2013generalizing}
A.~Zeh, A.~Wachter-Zeh, M.~Gadouleau, and S.~Bezzateev.
\newblock Generalizing bounds on the minimum distance of cyclic codes using
  cyclic product codes.
\newblock In {\em 2013 IEEE International Symposium on Information Theory},
  pages 126--130. IEEE, 2013.

\end{thebibliography}

\end{document}